\documentclass[a4paper,11pt]{amsart}
\usepackage{amsmath,amssymb}
\usepackage{mathrsfs}
\usepackage{eucal}
\usepackage{cases}
\usepackage{cite}
\usepackage{comment}

\makeatletter
 
  \@addtoreset{equation}{section}
 \makeatother

\newcommand{\slim}{\operatorname{s-}\hspace{-0.25cm}\lim_{t\to\pm\infty}}

\newcommand{\Schr}{Schr\"odinger }

\renewcommand{\a}{\alpha}

\renewcommand{\d}{\delta}

\renewcommand{\th}{\theta}

\renewcommand{\k}{\kappa}
\renewcommand{\l}{\lambda}

\newcommand{\x}{\xi}

\newcommand{\C}{\mathbb{C}}
\newcommand{\R}{\mathbb{R}}
\newcommand{\Z}{\mathbb{Z}}
\newcommand{\T}{\mathbb{T}}

\newcommand{\la}{\langle}
\newcommand{\ra}{\rangle}



\newtheorem{thm}{Theorem}[section]
\newtheorem{lem}[thm]{Lemma}
\newtheorem{prop}[thm]{Proposition}

\theoremstyle{definition}

\newtheorem{ass}[thm]{Assumption}

\theoremstyle{remark}
\newtheorem{rem}[thm]{Remark}


\title[Long-range scattering on graphene]{Long-range scattering theory for discrete Schr\"odinger operators on graphene}
\author{Yukihide TADANO}
\thanks{\noindent Keywords:long-range scattering theory, discrete Schr\"odinger operators on hexagonal lattice, wave operators, time-independent modifiers\\
Mathematics subject classification:47A40, 47B39, 81U05\\
Graduate School of Mathematical Sciences, the University of Tokyo, 3-8-1 Komaba, Meguro-ku, Tokyo, 153-8914, Japan. \\
E-mail: {\tt tadano@ms.u-tokyo.ac.jp}}
\date{December 21, 2018}

\begin{document}
\begin{abstract}
We consider a long-range scattering theory for discrete Schr\"odinger operators on the hexagonal lattice, which describe tight-binding Hamiltonians on the graphene sheet.
We construct Isozaki-Kitada modifiers for a pair of the difference Laplacian on the hexagonal lattice and perturbed operators with long-range potentials. We prove that these modified wave operators exist and that they are asymptotically complete.
\end{abstract}

\maketitle


\section{Introduction}  
The aim of the present paper is to consider a long-range scattering theory for discrete Schr\"odinger operators on graphene, that is, the hexagonal lattice.
Unlike discrete Schr\"odinger operators on the square and triangular lattices, operators on the hexagonal lattice cannot be represented as an operator on the space of $\C^1$-valued functions on $\mathbb{Z}^2$, but $\C^2$-valued.
Because of this aspect, a long-range scattering theory for this model cannot be treated as in \cite{T}. 
In the present paper, we generalize the results of \cite{T}, and in particular we construct Isozaki-Kitada modifiers for the hexagonal lattice.
For a short-range scattering theory for discrete Schr\"odinger operators on general lattices, including the hexagonal lattice, see \cite{P-R}.
See also \cite{An-I-M} and \cite{An-I-M2}  for spectral properties of discrete \Schr operators on general lattices.

Let $\mathcal{H}=\ell^2(\mathbb{Z}^2; \mathbb{C}^2)$. For $u \in \mathcal{H}$, we use the notation $u=\left( \begin{array}{c}
u_1 \\ u_2
\end{array}
\right)$, $u_1,u_2 \in \ell^2(\mathbb{Z}^2)$. The unperturbed discrete Schr\"odinger operator $H_0$ on graphene is described as the negative of the difference Laplacian
\begin{align}
H_0u[x] = -\left( \begin{array}{c}
u_2[x]+u_2[x-e_1]+u_2[x-e_2] \\ u_1[x]+u_1[x+e_1]+u_1[x+e_2]
\end{array}
\right), \quad x \in \mathbb{Z}^2, u \in \mathcal{H},
\end{align}
where $e_1=(1,0),$ $e_2=(0,1)$. The derivation of $H_0$ is found in e.g.\ \cite{A} and \cite{An-I-M}.
As is seen later, $H_0$ has purely absolutely continuous spectrum and $\sigma(H_0) = \sigma_{ac}(H_0) = [-3,3]$.

For a function $V:\mathbb{Z}^2 \to \mathbb{R}^2$, the corresponding multiplication operator is also denoted by $V$:
\begin{align}
Vu[x] = \left( \begin{array}{c}
V_1[x] u_1[x] \\ V_2[x] u_2[x]
\end{array}
\right) , 
\end{align}
where $V_1[x]$ and $V_2[x]$ are the first and second value of $V[x]$.
We set $H=H_0+V$.
It is known that if $V$ is short-range, i.e., $|V[x]|\leq C \la x \ra^{-\rho}$ for some $C>0$ and $\rho>1$, where $\la x\ra := (1+|x|^2)^{\frac{1}{2}}$, the wave operators
\begin{align*}
W^\pm = \slim e^{itH} e^{-itH_0}
\end{align*}
exist and $W^\pm$ are asymptotically complete, i.e., the range of $W^\pm$ equals to the absolutely continuous subspace of $H$.
In the present paper, we assume the long-range condition below.

\begin{ass}\label{ass}
The function $V$ has the following representation
\begin{align*}
V_1=V_\ell+V_{s,1}, \ V_2=V_\ell+V_{s,2},
\end{align*}
where $V_\ell$ and $V_{s,j}$ satisfy
\begin{align*}
&| \tilde\partial^\alpha V_\ell [x] | \leq C_\alpha \langle x \rangle^{-|\alpha|-\rho} , \quad x\in\mathbb{Z}^2, \ \alpha\in\mathbb{Z}_+^2 , \\
&| V_{s,j}[x] | \leq C \langle x \rangle^{-1-\rho} , \quad x\in\mathbb{Z}^2, \ j=1,2
\end{align*}
for some $\rho\in (0,1]$ and $C_\alpha , C > 0$. Here $\tilde\partial^\alpha=\tilde\partial_{x_1}^{\alpha_1}\tilde\partial_{x_2}^{\alpha_2}$, $\tilde\partial_{x_j} W[x]:=W[x]-W[x-e_j]$.
\end{ass}

\begin{rem}
The above assumption is invariant under the choice of isomorphism between the set of vertices of the hexagonal lattice and $\Z^2 \times \{1,2\}$ invariant under the canonical $\Z^2$ action.
In particular, it follows that the difference between each pair of the nearest vertices is short-range.
We note that the pair of potentials $V_1[x]=\langle x \rangle^{-1}$ and $V_2[x]=-\langle x \rangle^{-1}$, an analog of Wigner von-Neumann potentials, is not allowed under the above assumption.
We also note that, for $1$-dimensional discrete \Schr operators, embedded eigenvalues can occur even if $|V[x]| \leq C\la x\ra^{-1}$, $x\in\Z$ for some $C>0$ (see \cite{L}).
\end{rem}


We give notations for the description of the main theorem.
For a selfadjoint operator $S$ and an Borel set $I \subset \R$, $E_S(I)$ denotes the spectral projection of $S$ onto $I$ and $\mathcal{H}_{ac}(S)$ denotes the absolutely continuous subspace of $S$.
The main theorem of this paper is the following.

\begin{thm}\label{Main}
Assume that $V$ satisfies Assumption \ref{ass}. Then for any open set $\Gamma \Subset [-3,3] \backslash \{0,\pm 1, \pm 3\}$, one can construct a Fredholm operator $J$ on $\mathcal{H}$ such that there exist modified wave operators
\begin{align} \label{mWO}
W_{J}^\pm(\Gamma) := \slim e^{itH} J e^{-itH_0} E_{H_0}(\Gamma)
\end{align}
and the following properties hold:

i)Intertwining property $H W_{J}^\pm(\Gamma)=W_{J}^\pm(\Gamma) H_0$.

ii)Partial isometries $\|W_{J}^\pm(\Gamma) u\|=\|E_{H_0}(\Gamma)u\|$.

iii)Asymptotic completeness $\operatorname{Ran} W_{J}^\pm(\Gamma)=E_{H}(\Gamma)\mathcal{H}_{\text{ac}}(H)$.

\end{thm}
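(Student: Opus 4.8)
The plan is to follow the Isozaki--Kitada scheme, adapting it to the matrix-valued ($\mathbb{C}^2$) band structure of the hexagonal lattice. First I would pass to the Fourier representation $\mathcal{U}\colon \ell^2(\mathbb{Z}^2;\mathbb{C}^2)\to L^2(\mathbb{T}^2;\mathbb{C}^2)$, $\mathbb{T}^2=(\mathbb{R}/2\pi\mathbb{Z})^2$, under which $H_0$ becomes multiplication by
\[
\hat H_0(\xi)=-\begin{pmatrix}0 & \overline{h(\xi)}\\ h(\xi) & 0\end{pmatrix},\qquad h(\xi):=1+e^{i\xi_1}+e^{i\xi_2}.
\]
Its eigenvalues are $\lambda_\pm(\xi)=\pm|h(\xi)|$ with spectral projections $P_\pm(\xi)$, and I would record that $\lambda_\pm$ and $P_\pm$ are real-analytic on $\{h\neq 0\}$, i.e.\ away from the Dirac points where $\lambda_\pm=0$. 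A direct computation of $|h|^2=3+2(\cos\xi_1+\cos\xi_2+\cos(\xi_1-\xi_2))$ shows that the zeros of $\nabla_\xi\lambda_\pm$ occur exactly at the band extrema (value $\pm3$, at $\xi=0$) and the van Hove saddles (value $\pm1$). Hence on the window $\Gamma\Subset[-3,3]\setminus\{0,\pm1,\pm3\}$ the group velocity $\nabla_\xi\lambda_\pm$ is bounded away from $0$ and the two bands are uniformly separated; this non-degeneracy is precisely what makes the phase analysis below available and is exactly why the listed thresholds are removed.

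Next I would exploit the special form of Assumption \ref{ass}. Writing the multiplication operator as $V=V_\ell I+V_s$ with $V_s=\mathrm{diag}(V_{s,1},V_{s,2})$ short-range, the long-range part $V_\ell I$ is a scalar multiple of the fibre identity and so acts identically on both bands. Commuting $V_\ell I$ through the diagonalizing bundle map produces inter-band terms proportional to $\tilde\partial V_\ell$, which by Assumption \ref{ass} decays like $\langle x\rangle^{-1-\rho}$ and is therefore short-range. Consequently, modulo short-range errors, the problem on $E_{H_0}(\Gamma)\mathcal{H}$ decouples into two scalar long-range scattering problems, one for each dispersion $\lambda_\pm$ with the common scalar long-range potential $V_\ell$; this is the structural reduction that brings the setting of \cite{T} to bear.

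The analytic heart is the construction of $J$ as a matrix-valued Fourier integral operator. On each band and in outgoing ($+$)/incoming ($-$) regions $\{\pm\, x\cdot\nabla_\xi\lambda_\pm(\xi)>0,\ \lambda_\pm(\xi)\in\Gamma\}$ I would set
\[
J^\pm u=\mathcal{U}^*\!\int_{\mathbb{T}^2} e^{i\phi^\pm(x,\xi)}\,a^\pm(x,\xi)\,(\mathcal{U}u)(\xi)\,d\xi,
\]
where the scalar phase $\phi^\pm=x\cdot\xi+(\text{corr.})$ solves the eikonal equation $\lambda_\pm(\nabla_x\phi^\pm)+V_\ell=\lambda_\pm(\xi)$ in the relevant region (solvable by the standard outgoing/incoming construction since $\nabla_\xi\lambda_\pm\neq0$), and the $\mathbb{C}^2$-valued amplitude satisfies $a^\pm P_\pm=a^\pm$ and the transport equations cancelling the subleading symbol of $HJ^\pm-J^\pm H_0$. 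Assembling $J$ from $J^+,J^-$ over both bands together with a cutoff for the remaining compact region of phase space, the construction is arranged so that $B:=HJ-JH_0$ is short-range and $J$ differs from an isometry on $E_{H_0}(\Gamma)\mathcal{H}$ by a compact operator, whence $J$ is Fredholm. I expect this matrix-valued eikonal/transport analysis, and in particular checking that the inter-band coupling stays short-range throughout, to be the \emph{main obstacle}.

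Finally I would conclude as follows. Existence of the limits \eqref{mWO} follows from Cook's method: since
\[
\tfrac{d}{dt}\,e^{itH}Je^{-itH_0}E_{H_0}(\Gamma)=i\,e^{itH}B\,e^{-itH_0}E_{H_0}(\Gamma),
\]
the non-stationary phase propagation estimate on $\Gamma$ (valid because $\nabla_\xi\lambda_\pm$ never vanishes there) gives $\int_{\mathbb{R}}\|B e^{-itH_0}E_{H_0}(\Gamma)u\|\,dt<\infty$. The intertwining property i) is then the usual consequence of the existence of the limit together with $E_{H_0}(\Gamma)e^{isH_0}=e^{isH_0}E_{H_0}(\Gamma)$. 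For ii) and iii) I would construct a dual modifier $J'$ with $J'J-E_{H_0}(\Gamma)$ and $JJ'-E_H(\Gamma)$ short-range and show that the reverse wave operators $\slim e^{itH_0}J'e^{-itH}E_H(\Gamma)$ exist by the same Cook argument. Since short-range remainders contribute nothing to the wave-operator products, chaining the limits yields $(W_J^\pm(\Gamma))^\ast W_J^\pm(\Gamma)=E_{H_0}(\Gamma)$, which is the partial isometry ii), and $\operatorname{Ran}W_J^\pm(\Gamma)=E_H(\Gamma)\mathcal{H}_{\mathrm{ac}}(H)$, which is the asymptotic completeness iii); here the identification of the range uses the limiting absorption principle on the non-degenerate window $\Gamma$, which rules out singular continuous spectrum and eigenvalues of $H$ in $\Gamma$.
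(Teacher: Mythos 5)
Your overall plan is the classical one-step Isozaki--Kitada scheme (a single modifier $J$ intertwining $H$ and $H_0$, with Cook's method in both directions), whereas the paper interposes a modified free operator and an intermediate Hamiltonian; more importantly, your argument has a genuine gap at exactly the point the paper identifies as the crucial difficulty. You claim that commuting the scalar long-range part $V_\ell$ through the diagonalizing bundle map produces inter-band terms ``proportional to $\tilde\partial V_\ell$'' and hence short-range. The decay gain in such commutators comes from a pseudodifference calculus (Lemma \ref{PD calculus}) that requires the diagonalizer's symbol to have uniformly bounded $\xi$-derivatives; but the fibre diagonalizer $U(\xi)$ (equivalently the projections $P_\pm(\xi)$) is singular at the Dirac points $\alpha^{-1}(0)$, its derivatives blow up there, and the paper states explicitly that one \emph{cannot} prove $V_\ell - \mathcal{F}^* U(\cdot) \mathcal{F}\, V_\ell\, \mathcal{F}^* U(\cdot)^* \mathcal{F}$ is short-range for precisely this reason. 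Localizing energies to $\Gamma$ does not repair this: $V_\ell$ and the perturbed evolution $e^{-itH}$ do not commute with band projections or with $E_{H_0}(\Gamma)$, so in the completeness step the inter-band coupling near the Dirac points cannot be kept off the support of your symbols. The paper's device is to replace $H_0$ by $H_0'$, whose symbol is modified near $\alpha^{-1}(0)$ via the cutoff $\kappa$ so that a \emph{globally smooth} unitary $U'(\xi)$ diagonalizes it while $E_{H_0}(I)=E_{H_0'}(I)$ for $I$ away from $[-\frac{\delta}{2},\frac{\delta}{2}]$; the long-range comparison is then between $H_0'$ and $H_\ell' = H_0' + U' V_\ell (U')^*$, which is exactly diagonalized and reduces to two scalar problems solved by \cite{T}, and the residual step from $H_\ell'$ to $H$ is handled not as a decaying potential but through the commutator structure $V_s \chi(H_0') + [V_\ell, \chi(H_0')(U')^*]U'$, where the smooth functional calculus supplies the symbol regularity your reduction lacks.

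A second gap lies in your completeness argument. Cook's method for the reverse limit of $e^{itH_0} J' e^{-itH} E_H(\Gamma)$ requires integrability of $\| B' e^{-itH} u \|$, and non-stationary phase estimates control only the free group $e^{-itH_0}$; for the perturbed evolution you need Kato smoothness of weighted cutoffs $\langle x\rangle^{-s}\chi(H)$, which rests on a Mourre estimate with an explicit conjugate operator --- and constructing that conjugate operator runs into the same Dirac singularity (the paper builds $A = U' \tilde A (U')^*$ from the smooth diagonalizer, see (\ref{conjugate})--(\ref{conjugate 3})). You invoke the limiting absorption principle only in passing, and your claim that it ``rules out eigenvalues of $H$ in $\Gamma$'' is incorrect: Mourre theory yields at most finitely many eigenvalues there, and completeness is asserted only on $E_H(\Gamma)\mathcal{H}_{\mathrm{ac}}(H)$. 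Finally, the paper also needs a compactness/weak-convergence argument (as in (\ref{compactness argument})) to insert the spectral cutoffs before the Cook--Kuroda method applies; this step is absent from your sketch, and without it the smoothness bounds cannot be brought to bear on arbitrary vectors in the absolutely continuous subspace.
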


The above theorem is an analog of \cite{N} and \cite{T} in the sense of a long-range scattering theory for discrete Schr\"odinger operators.
For a long-range scattering theory for Schr\"odinger operators on the Euclidean space, see e.g.\ \cite{D-G}, \cite{Y} and references therein.

 We observe spectral properties of the free operator $H_0$, and we show an abstract form of the operator $J$ in (\ref{mWO}).
By $\mathcal{F}:\mathcal{H} \rightarrow L^2(\mathbb{T}^2; \mathbb{C}^2)$, $\mathbb{T}^2:=[-\pi,\pi)^2$, we denote the discrete Fourier transform
\begin{align}
&\mathcal{F}u(\xi)=\left(\begin{array}{c}
Fu_1(\xi) \\
Fu_2(\xi)
\end{array}
\right), \quad \xi \in \mathbb{T}^2, \\
&Fu_j(\xi)=(2\pi)^{-1} \sum_{x\in\mathbb{Z}^2} e^{-ix\cdot\xi} u_j[x], \quad j=1,2. \nonumber
\end{align}
Then $\mathcal{F} \circ H_0 \circ \mathcal{F}^*$ is a multiplier by the matrix
\begin{align}
H_0(\xi) = \left( \begin{array}{cc}
0 & \overline{\alpha(\xi)} \\
\alpha(\xi) & 0
\end{array}
\right),
\end{align}
where $\alpha(\xi):=-(1+e^{i\xi_1}+e^{i\xi_2})$.
Note that for each $\xi \in \mathbb{T}^2$, $H_0(\xi)$ is an Hermitian matrix.

In order to determine the spectrum $\sigma(H_0)$ of $H_0$, we consider the diagonalization of matrix at each point in the momentum space $\mathbb{T}^2$.
We set a unitary matrix
\begin{align*}
U(\xi):= \frac{1}{\sqrt{2}} \begin{pmatrix}
1 & -\frac{\overline{\alpha(\xi)}}{|\alpha(\xi)|} \\ \frac{\alpha(\xi)}{|\alpha(\xi)|} & 1
\end{pmatrix}, \quad \xi \in \mathbb{T}^2\backslash \{ \alpha^{-1}(0) \}.
\end{align*}
Then $H_0(\xi)$ is diagonalized by $U(\xi)$; setting $p(\xi):=|\alpha(\xi)|$, we have
\begin{align*}
\tilde H_0(\xi):= U(\xi)^* H_0(\xi) U(\xi) = \begin{pmatrix}
p(\xi) & 0 \\ 0 & -p(\xi)
\end{pmatrix}, \ \xi \in \mathbb{T}^2\backslash \{ \alpha^{-1}(0) \}.
\end{align*}
Since $\alpha^{-1}(0)= \{(\pm \frac{2}{3}\pi, \mp \frac{2}{3}\pi)\}$, $\tilde H_0(\xi)$ and $U(\xi)$ are defined a.e.\ in $\mathbb{T}^2$.
Furthermore $p$ is smooth outside $\alpha^{-1} (0)$ and the set of its critical points
\begin{align}
\operatorname{Cr} :=& \{ \xi \in \mathbb{T}^2 \backslash \alpha^{-1}(0) \mid \nabla_\xi p(\xi)=0\} \\
=&\{(0,0),(-\pi,0),(0,-\pi),(-\pi,-\pi)\} \nonumber
\end{align}
has Lebesgue measure zero.
Thus $H_0$ has purely absolutely continuous spectrum (see \cite[Proposition 1.2]{T}) and
\begin{align}
\sigma(H_0) = \overline{p(\mathbb{T}^2\backslash\alpha^{-1}(0)) \cup \left( - p(\mathbb{T}^2\backslash\alpha^{-1}(0)) \right)} = [-3,3].
\end{align}

Using the above $U$, $J$ is formally represented as
\begin{align*}
J = \mathcal{F}^* U(\cdot) \mathcal{F} \circ
\begin{pmatrix} J_+ & 0 \\ 0 & J_- \end{pmatrix}
\circ \mathcal{F}^* U(\cdot)^* \mathcal{F} ,
\end{align*}
where
\begin{align*}
J_\pm u[x] = (2\pi)^{-1} \int_{\T^2} e^{i\varphi_{\pm}(x,\x)} Fu(\x) d\x
\end{align*}
and $\varphi_\pm(x,\x)$, $(x,\x)\in \R^2 \times \T^2$, are outgoing and incoming solution of the eikonal equation
\begin{align*}
p(\nabla_x\varphi) + \tilde V_\ell(x) = p(\x) ,
\end{align*}
where $\tilde V_\ell$ is a suitable smooth extension of $V_\ell$ onto $\R^2$.
However there are two technical difficulties. One is the singularity of $p(\x)$ at $\alpha^{-1}(0)$. The other is the singularity of $U(\x)$.
The latter is more crucial because we cannot prove that the difference $V_\ell - \mathcal{F}^* U \mathcal{F} \circ V_\ell \circ \mathcal{F}^* U^* \mathcal{F}$ is short-range due to the singularity of $U(\x)$.
We will avoid the above difficulties in Subsection 2.1.

We describe the outline of this paper.
 The essential idea of proof is as follows; in order to make the above long-range scattering problem easier, we replace the free operator $H_0$ to a modified free operator $H_0^\prime$ witch can be diagonalized in the whole momentum space $\mathbb{T}^2$.
 In Subsection \ref{subsec2.1}, we construct the modified free operator $H^\prime_0$, and the property of $H^\prime_0$ is written in Lemma \ref{step1}.
 Considering the long-range scattering theory for $H_0^\prime$ instead of $ H_0$, we can reduce the problem of long-range scattering for operators on $\mathcal{H}$ to that for operators on $\ell^2(\mathbb{Z}^2)$. Then we will see in Subsection \ref{long-range} that the result of \cite{T} is applicable to the above setting.
 Subsection \ref{short-range} concerns a short-range scattering theory.
 This step is treated with the limiting absorption principle and Kato's smooth perturbation theory.
 We also use a pseudodifferential calculus prepared in Subsection \ref{PD subsection}.
  In Appendix \ref{Mourre section}, we show the limiting absorption principle by using the Mourre theory.

\section*{Acknowledgement}
The author was supported by JSPS Research Fellowship for Young Scientists, KAKENHI Grant Number 17J05051. The author would like to thank his supervisor Shu Nakamura for encouraging to write this paper and helpful discussions about the construction of modified wave operators.

\section{Preliminaries} \label{Sketch section}

\subsection{Construction of the modified free operator}\label{subsec2.1}

Let us fix the open interval $\Gamma \Subset [-3,3] \backslash \{0,\pm 1, \pm 3\}$ as in Theorem \ref{Main} and $\delta:= \operatorname{dist}(0,\Gamma) = \inf_{\lambda\in\Gamma} |\lambda|$.
 We construct a modified free operator 
\begin{align*}
H^\prime_0:=\mathcal{F}^* \circ H^\prime_0(\cdot) \circ \mathcal{F},
\end{align*}
where $H^\prime_0(\xi) \in C^\infty(\mathbb{T}^2; M_2(\mathbb{C}))$ is a simple symmetric matrix for each $\xi\in\mathbb{T}^2$. We will choose $H^\prime_0(\x)$ so that $H^\prime_0$ has the same spectral projection as $H_0$ in $[-3,-\frac{\d}{2}]\cup [\frac{\d}{2},3]$.


Let $\kappa \in C^\infty\left( \mathbb{R}_{\geq 0} ; \mathbb{R}_{\geq 0} \right)$ be fixed such that $\operatorname{supp} \kappa \subset [ 0 , \frac{\delta^2}{4} )$ and $0 < E+\kappa(E)^2 < \frac{\delta^2}{4}$ for $E \in [0,\frac{\delta^2}{4})$.
Let us define
%
%
%
\begin{align}
H^\prime(\xi) := \begin{pmatrix} \kappa( p(\xi)^2 ) & \overline{\alpha(\xi)} \\ \alpha(\xi) & -\kappa( p(\xi)^2 ) \end{pmatrix}
\end{align}
Then $H^\prime(\xi)$ has two eigenvalues
\begin{align} \label{ev}
\lambda_\pm (\xi) := \pm \left( \kappa(p(\xi)^2)^2 + p(\xi)^2 \right)^{\frac{1}{2}}
\end{align}
and the corresponding eigenvectors are
\begin{align*}
f_+(\xi) = 
\begin{pmatrix} \kappa(p(\xi)^2) + \lambda_+ (\xi) \\ \alpha(\xi)
\end{pmatrix}, 
\ 
f_-(\xi) = 
\begin{pmatrix} -\overline{\alpha(\xi)} \\ \kappa(p(\xi)^2) + \lambda_+ (\xi) 
\end{pmatrix} .
\end{align*}
Therefore letting
\begin{align}
U^\prime(\xi) :=& 
\frac{1}{C(\xi)} 
\begin{pmatrix}
\kappa(p(\xi)^2) + \lambda_+ (\xi) & -\overline{\alpha(\xi)} \\ 
\alpha(\xi) & \kappa(p(\xi)^2) + \lambda_+ (\xi)
\end{pmatrix}, \label{U prime} \\
C(\xi) :=& \left\{ p(\xi)^2 + \left[ \kappa(p(\xi)^2) + \lambda_+ (\xi) \right]^2 \right\}^{\frac{1}{2}}, \nonumber
\end{align}
we learn that $U^\prime (\xi)$ is a unitary matrix-valued smooth function on $\mathbb{T}^2$ and
\begin{align}
\tilde H^\prime_0 (\xi) := U^\prime (\xi)^* H^\prime_0 (\xi) U^\prime (\xi)
= 
\begin{pmatrix}
	\lambda_+(\xi) & 0 \\
	0 & \lambda_- (\xi) 
\end{pmatrix}
.
\end{align}
Note that $\l_\pm(\x)=\pm p(\x)$ for $\x\in p^{-1}\left((\frac{\delta}{2},3]\right)$ by the condition of $\k$. 
Thus we obtain the following lemma.

\begin{lem}\label{step1}
Let
\begin{align}\label{def of H prime}
H^\prime_0 := \mathcal{F}^* H^\prime_0(\cdot) \mathcal{F}, \
 \tilde H^\prime_0 := \mathcal{F}^* \tilde H^\prime_0(\cdot) \mathcal{F}, \
 U^\prime := \mathcal{F}^* U^\prime(\cdot) \mathcal{F}
\end{align}
and
\begin{align}\label{def of lambda}
\lambda_\pm := F^* \lambda_\pm(\cdot) F .
\end{align}
Then
\begin{align}
&\tilde H^\prime_0 = (U^\prime)^* H^\prime_0 U^\prime
= \begin{pmatrix} \lambda_+ & 0 \\ 0 & \lambda_- \end{pmatrix} , \\
&E_{H_0}(I) = E_{H^\prime_0}(I) , \quad I \Subset (-\infty, -\frac{\delta}{2}) \cup (\frac{\delta}{2}, \infty).
\end{align}
In particular,
\begin{align}
& e^{-itH_0} E_{H_0}\left(\Gamma\right) = e^{-itH^\prime_0} E_{H^\prime_0}\left(\Gamma\right) , \quad t\in\mathbb{R} , \\
& \chi(H_0) = \chi(H^\prime_0) , \quad \chi \in C^\infty_c(\Gamma) .
\end{align}
\end{lem}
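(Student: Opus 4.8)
The plan is to reduce every assertion to a fibrewise statement in the momentum variable $\x\in\T^2$ and to read off the conclusions from the two defining conditions on $\k$. Both $H_0$ and $H^\prime_0$ are fibered operators through $\mathcal{F}$, namely $H_0=\mathcal{F}^*H_0(\cdot)\mathcal{F}$ and $H^\prime_0=\mathcal{F}^*H^\prime_0(\cdot)\mathcal{F}$, so their Borel functional calculus is fibered as well: for any bounded Borel function $g$ one has $g(H_0)=\mathcal{F}^*g(H_0(\cdot))\mathcal{F}$ and likewise for $H^\prime_0$, where $g(H_0(\x))$ means the matrix functional calculus applied pointwise; in particular $E_{H_0}(I)=\mathcal{F}^*\mathbf{1}_I(H_0(\cdot))\mathcal{F}$. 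The first identity is then immediate: the pointwise diagonalization $\tilde H^\prime_0(\x)=U^\prime(\x)^*H^\prime_0(\x)U^\prime(\x)=\operatorname{diag}(\l_+(\x),\l_-(\x))$ has already been established at the matrix level in \eqref{ev}--\eqref{U prime}, and since conjugation by $\mathcal{F}$ turns products of multipliers into products of operators, transporting this identity through \eqref{def of H prime}--\eqref{def of lambda} yields $\tilde H^\prime_0=(U^\prime)^*H^\prime_0 U^\prime=\operatorname{diag}(\l_+,\l_-)$ with no further work.

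For the equality of spectral projections I fix a Borel set $I$ with $I\Subset(-\infty,-\tfrac{\d}{2})\cup(\tfrac{\d}{2},\infty)$, so that $I\cap[-\tfrac{\d}{2},\tfrac{\d}{2}]=\emptyset$, and I argue pointwise in $\x$. If $p(\x)\geq\tfrac{\d}{2}$, then $p(\x)^2\geq\tfrac{\d^2}{4}$ lies outside $\operatorname{supp}\k$, hence $\k(p(\x)^2)=0$ and $H^\prime_0(\x)=H_0(\x)$ identically, so trivially $\mathbf{1}_I(H^\prime_0(\x))=\mathbf{1}_I(H_0(\x))$. If instead $p(\x)<\tfrac{\d}{2}$, then the eigenvalues $\pm p(\x)$ of $H_0(\x)$ already lie in $(-\tfrac{\d}{2},\tfrac{\d}{2})$, while the condition $0<E+\k(E)^2<\tfrac{\d^2}{4}$ for $E\in[0,\tfrac{\d^2}{4})$ forces the eigenvalues $\l_\pm(\x)$ of $H^\prime_0(\x)$ from \eqref{ev} into $(-\tfrac{\d}{2},\tfrac{\d}{2})\setminus\{0\}$; since $I$ avoids this band, both spectral projections vanish. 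In either case $\mathbf{1}_I(H_0(\x))=\mathbf{1}_I(H^\prime_0(\x))$ for every $\x\in\T^2$, and fibering back through $\mathcal{F}$ gives $E_{H_0}(I)=E_{H^\prime_0}(I)$.

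The two closing statements then follow from this spectral-measure equality. Since $\operatorname{dist}(0,\Gamma)=\d$ forces $\overline\Gamma\subset\{\,|\l|>\tfrac{\d}{2}\,\}$, the set $\Gamma$ meets the hypothesis above, so $E:=E_{H_0}(\Gamma)=E_{H^\prime_0}(\Gamma)$; because every Borel $J\subseteq\Gamma$ again satisfies $J\Subset(-\infty,-\tfrac{\d}{2})\cup(\tfrac{\d}{2},\infty)$, the spectral measures of $H_0$ and $H^\prime_0$ agree on all Borel subsets of $\Gamma$, and integrating $e^{-it\l}$ over $\Gamma$ yields $e^{-itH_0}E=e^{-itH^\prime_0}E$. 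Likewise, for $\chi\in C^\infty_c(\Gamma)$ I choose a bounded $I$ with $\operatorname{supp}\chi\subset I$ and $I\Subset(-\infty,-\tfrac{\d}{2})\cup(\tfrac{\d}{2},\infty)$, and conclude $\chi(H_0)=\int_I\chi\,dE_{H_0}=\int_I\chi\,dE_{H^\prime_0}=\chi(H^\prime_0)$.

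I do not expect any hard analysis here; the entire content sits in the construction of $\k$. The one step that must be checked with care is the second case above, namely verifying that in the region $p(\x)<\tfrac{\d}{2}$ the perturbed eigenvalues $\l_\pm(\x)$ stay strictly inside the band $(-\tfrac{\d}{2},\tfrac{\d}{2})$ and away from $0$. This is exactly what the two imposed conditions $\operatorname{supp}\k\subset[0,\tfrac{\d^2}{4})$ and $0<E+\k(E)^2<\tfrac{\d^2}{4}$ are designed to guarantee, and once it is in place everything else is the routine transport of pointwise matrix identities through the unitary $\mathcal{F}$.
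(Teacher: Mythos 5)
Your proof is correct and takes essentially the same route as the paper: the paper's one-line justification (``$\lambda_\pm(\xi)=\pm p(\xi)$ for $\xi\in p^{-1}((\frac{\delta}{2},3])$ by the condition of $\kappa$'') is exactly your fibrewise case analysis, since $\kappa(p(\xi)^2)=0$ forces $H^\prime_0(\xi)=H_0(\xi)$ where $p(\xi)\geq\frac{\delta}{2}$, while the condition $0<E+\kappa(E)^2<\frac{\delta^2}{4}$ confines both fibers' spectra to $(-\frac{\delta}{2},\frac{\delta}{2})$ where $p(\xi)<\frac{\delta}{2}$. You have merely spelled out the transport of the pointwise matrix identities through $\mathcal{F}$ and the spectral-measure bookkeeping that the paper leaves implicit.
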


\subsection{Pseudodifferential calculus}\label{PD subsection}
In this subsection we prepare an assertion concerning the boundedness of pseudodifference operators on $\Z^d$, $d\geq1$.
This lemma is an analog of symbol calculus of pseudodifferential operators on $\mathbb{T}^2$.
The proof is given in Appendix \ref{pfPD}.
See also \cite[Theorem 4.7.10]{R-T}.

\begin{lem} \label{PD calculus}
Let $m_1,m_2 \in\R$, $a : \T^d \times \Z^d \to \C$, $b : \Z^d \to \C$, and
\begin{align*}
\mathrm{Op}(a)u[x] &= (2\pi)^{-d} \int_{\T^d} \sum_{y\in\Z^d} e^{i(x-y)\cdot\x} a(\x,y) u[y] d\x , \\
\mathrm{Op}(b)u[x] &= b[x]u[x].
\end{align*}
Suppose that $a(\cdot,y) \in C^\infty(\T^d)$ for $y \in \Z^d$ and $|\partial_\xi^\alpha a(\x,y)| \leq C_\alpha \la y\ra^{-m_1}$, $|\tilde \partial_{x_j} b[x]| \leq C \langle x \rangle^{-m_2}$ for $x\in\Z^2$, $j=1,\dots, d$, where $\tilde \partial_{x_j} b[x] = b[x] - b[x-e_j]$.
Then, $\la x \ra^{p} [\mathrm{Op}(b), \mathrm{Op}(a)] \la x \ra^{q}$ is a bounded operator on $\ell^2(\Z^d)$ if $p+q=m_1+m_2$.
\end{lem}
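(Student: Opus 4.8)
The plan is to realize the commutator as a single pseudodifference operator and then estimate its integral kernel by a Schur test. A direct computation, writing out $\mathrm{Op}(b)\mathrm{Op}(a)$ and $\mathrm{Op}(a)\mathrm{Op}(b)$, gives
\begin{align*}
[\mathrm{Op}(b),\mathrm{Op}(a)]u[x] = (2\pi)^{-d}\int_{\T^d}\sum_{y\in\Z^d} e^{i(x-y)\cdot\x}\, a(\x,y)\,(b[x]-b[y])\, u[y]\, d\x ,
\end{align*}
so the weighted operator $\la x\ra^{p}[\mathrm{Op}(b),\mathrm{Op}(a)]\la x\ra^{q}$ has integral kernel $\la x\ra^{p} K(x,y)\la y\ra^{q}$ with
\begin{align*}
K(x,y) = (2\pi)^{-d}\,(b[x]-b[y])\int_{\T^d} e^{i(x-y)\cdot\x}\, a(\x,y)\, d\x .
\end{align*}
It then suffices to bound this kernel so that it is absolutely summable in each variable, uniformly in the other.

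First I would control the oscillatory integral by integrating by parts in $\x$. Since $a(\cdot,y)\in C^\infty(\T^d)$ is periodic, there are no boundary terms, and the identity $(i(x-y))^{\alpha} e^{i(x-y)\cdot\x}=\partial_\x^{\alpha} e^{i(x-y)\cdot\x}$ together with the symbol bound $|\partial_\x^{\alpha} a(\x,y)|\leq C_\alpha\la y\ra^{-m_1}$ gives, for every multi-index $\alpha$,
\begin{align*}
|(x-y)^{\alpha}|\left|\int_{\T^d} e^{i(x-y)\cdot\x} a(\x,y)\, d\x\right| \leq C_\alpha \la y\ra^{-m_1} .
\end{align*}
Summing over $|\alpha|\leq N$ yields, for arbitrary $N$,
\begin{align*}
\left|\int_{\T^d} e^{i(x-y)\cdot\x} a(\x,y)\, d\x\right| \leq C_N\, \la x-y\ra^{-N}\, \la y\ra^{-m_1} .
\end{align*}

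Second I would estimate the finite difference $b[x]-b[y]$ by telescoping along a coordinate (L-shaped) lattice path from $y$ to $x$: each step contributes a factor $\tilde\partial_{x_j} b$ bounded by $C\la z\ra^{-m_2}$ at the intermediate point $z$, and the number of steps is $O(\la x-y\ra)$. Using Peetre's inequality $\la x\ra^{s}\leq C_s\la x-y\ra^{|s|}\la y\ra^{s}$ together with the constraint $p+q=m_1+m_2$, one has $\la x\ra^{p}\la y\ra^{q}\leq C\la y\ra^{m_1+m_2}\la x-y\ra^{|p|}$, while for each $z$ on the path $\la y\ra^{m_2}\la z\ra^{-m_2}\leq C\la x-y\ra^{|m_2|}$. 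Combining these,
\begin{align*}
\la x\ra^{p}\la y\ra^{q}\,|b[x]-b[y]|\,\la y\ra^{-m_1} \leq C\,\la x-y\ra^{|p|+1+|m_2|} ,
\end{align*}
and hence, with the previous display,
\begin{align*}
\la x\ra^{p}\,|K(x,y)|\,\la y\ra^{q} \leq C_N\, \la x-y\ra^{-N+|p|+1+|m_2|} .
\end{align*}
Choosing $N>d+|p|+1+|m_2|$ makes the right-hand side summable in $y$ uniformly in $x$, and in $x$ uniformly in $y$, so the Schur test delivers boundedness on $\ell^2(\Z^d)$.

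The main obstacle is the second step: one must track how the weights $\la x\ra^{p}$, $\la y\ra^{q}$ interact with the weights $\la z\ra^{-m_2}$ accumulated along the telescoping path, and confirm that all the resulting powers of $\la x-y\ra$ are harmless. This is precisely where smoothness of $a(\cdot,y)$ in $\x$ is decisive: having arbitrarily many $\x$-derivatives lets $N$ be taken as large as we like, so the polynomial growth $\la x-y\ra^{|p|+1+|m_2|}$ produced by Peetre's inequality and the path length is absorbed by the oscillatory decay. No regularity in the $y$-variable of $a$ is needed, only the uniform size estimate $\la y\ra^{-m_1}$.
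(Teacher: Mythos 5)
Your proof is correct and follows essentially the same route as the paper's: represent the commutator by the kernel $(b[x]-b[y])A[x,y]$, gain arbitrary decay $\la x-y\ra^{-N}\la y\ra^{-m_1}$ by integrating by parts in $\x$, telescope $b[x]-b[y]$ along a lattice path while distributing the weights via Peetre's inequality and the constraint $p+q=m_1+m_2$, and conclude by the Schur (Young) test. The bookkeeping of the powers of $\la x-y\ra$ in your second step is done accurately (the paper's own statement of the exponent $M$ contains a harmless typo), so nothing is missing.
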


\section{Proof of Theorem \ref{Main}}

First note that the properties i) and ii) are satisfied if the limits (\ref{mWO}) exist. See \cite{IsoKita} and \cite{Y} for the proofs.

We denote by $V_\ell$ the multiplication operator by $\begin{pmatrix} V_\ell[x] \\ V_\ell[x] \end{pmatrix}$ if there is no risk of confusion.
Let
\begin{align}
H^\prime_\ell := H^\prime_0 + U^\prime V_\ell \left(U^\prime\right)^* = U^\prime \left( \tilde H_0^\prime + V_\ell \right) \left(U^\prime\right)^*.
\end{align}
Then it suffices to prove the following two assertions.

\begin{thm} \label{Step2}
One can construct a Fredholm operator $J$ such that there exist modified wave operators
\begin{align}\label{mWO in step2}
W_{J,\ell}^{\prime,\pm}(\Gamma) := \slim e^{itH^\prime_\ell} J e^{-itH^\prime_0} E_{H^\prime_0}(\Gamma)
\end{align}
exist and $\operatorname{Ran} W_{J,\ell}^{\prime,\pm}(\Gamma) = E_{H^\prime_\ell}(\Gamma) \mathcal{H}_{ac}(H^\prime_\ell)$.
\end{thm}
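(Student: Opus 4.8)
The plan is to use the unitary $U^\prime$ of Lemma~\ref{step1} to strip off the matrix structure and reduce the claim to two scalar long-range scattering problems on $\ell^2(\Z^2)$, to which \cite{T} applies directly. Writing $H^\prime_0 = U^\prime \tilde H^\prime_0 (U^\prime)^*$ and $H^\prime_\ell = U^\prime(\tilde H^\prime_0 + V_\ell)(U^\prime)^*$, the intertwining $e^{itH^\prime_\ell} = U^\prime e^{it(\tilde H^\prime_0 + V_\ell)}(U^\prime)^*$ and the analogous identity for $e^{-itH^\prime_0}$ show that, seeking $J$ of the form $J = U^\prime \tilde J (U^\prime)^*$, one has
\begin{align*}
W_{J,\ell}^{\prime,\pm}(\Gamma) = U^\prime \left( \slim e^{it(\tilde H^\prime_0 + V_\ell)} \tilde J\, e^{-it\tilde H^\prime_0} E_{\tilde H^\prime_0}(\Gamma) \right) (U^\prime)^* .
\end{align*}
Because $\tilde H^\prime_0 = \mathrm{diag}(\lambda_+,\lambda_-)$ is a Fourier multiplier and $V_\ell$ acts as the scalar multiplication $V_\ell[x]\,I_2$, the operator $\tilde H^\prime_0 + V_\ell$ is block diagonal and splits into the two scalar discrete \Schr operators $h_\pm := \lambda_\pm + V_\ell$ on $\ell^2(\Z^2)$. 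Choosing $\tilde J = \mathrm{diag}(J_+, J_-)$ block diagonal as well, the bracketed strong limit decouples into the two scalar modified wave operators $\slim e^{ith_\pm} J_\pm e^{-it\lambda_\pm} E_{\lambda_\pm}(\Gamma)$.

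Next I would verify that each scalar pair $(\lambda_\pm, h_\pm)$ falls under the hypotheses of \cite{T}. The free symbol $\lambda_\pm(\x) = \pm(\kappa(p(\x)^2)^2 + p(\x)^2)^{1/2}$ is smooth on all of $\T^2$: the radicand is strictly positive everywhere, since on $\operatorname{supp}\kappa$ it is bounded below by the condition $0 < E + \kappa(E)^2$ imposed on $\kappa$ (in particular $\kappa(0)\neq 0$ cures the singularity of $p$ at $\alpha^{-1}(0)$), while off $\operatorname{supp}\kappa$ it equals $p^2 \geq \tfrac{\delta^2}{4}$. The potential $V_\ell$ obeys the long-range bounds $|\tilde\partial^\alpha V_\ell[x]| \leq C_\alpha \la x\ra^{-|\alpha|-\rho}$ of Assumption~\ref{ass}. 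Finally, on $\Gamma$ we have $\lambda_\pm = \pm p$ by the choice of $\kappa$, and the energies in $\Gamma$ are non-critical for $\pm p$ because the critical values of $p$ are exactly $\{1,3\}$, which are excluded from $\Gamma$; the same exclusion together with $0\notin\overline\Gamma$ keeps us away from $\alpha^{-1}(0)$. Hence \cite{T} supplies Isozaki--Kitada modifiers $J_\pm$ (the oscillatory-integral operators built from phases $\varphi_\pm$ solving the eikonal equation associated with $\lambda_\pm$) that are Fredholm and for which the scalar modified wave operators exist with ranges $E_{h_\pm}(\Gamma)\mathcal{H}_{ac}(h_\pm)$.

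It then remains to assemble and transport the conclusion back through $U^\prime$. I would set $J := U^\prime\,\mathrm{diag}(J_+,J_-)\,(U^\prime)^*$; since $U^\prime$ is unitary and $J_+,J_-$ are Fredholm, $J$ is Fredholm on $\mathcal{H}$. The strong limit above exists and equals $U^\prime\,\mathrm{diag}(W_+^\pm,W_-^\pm)\,(U^\prime)^*$, so $W_{J,\ell}^{\prime,\pm}(\Gamma)$ exists. Using $\mathcal{H}_{ac}(H^\prime_\ell) = U^\prime\big(\mathcal{H}_{ac}(h_+)\oplus\mathcal{H}_{ac}(h_-)\big)$ and $E_{H^\prime_\ell}(\Gamma) = U^\prime\,\mathrm{diag}(E_{h_+}(\Gamma),E_{h_-}(\Gamma))\,(U^\prime)^*$, the blockwise range identities combine to $\operatorname{Ran} W_{J,\ell}^{\prime,\pm}(\Gamma) = E_{H^\prime_\ell}(\Gamma)\mathcal{H}_{ac}(H^\prime_\ell)$, which is the asserted asymptotic completeness.

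The genuinely nontrivial input is the scalar long-range theory of \cite{T}, which I invoke; the work specific to this theorem is the reduction. I expect the main obstacle to be the careful bookkeeping that the conjugation by $U^\prime$ really does produce two \emph{scalar} problems with the \emph{same} long-range potential $V_\ell$ (this hinges on $V_\ell$ being a scalar multiple of $I_2$, so that it is block diagonal even though it does not commute with $U^\prime$), together with confirming that the modified symbols $\lambda_\pm$ meet the smoothness and non-criticality requirements of \cite{T} precisely on the window $\Gamma$ --- the role of the cutoff $\kappa$ being to restore smoothness at the Dirac points $\alpha^{-1}(0)$ without disturbing the spectral picture on $\Gamma$.
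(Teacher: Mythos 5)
Your proposal is correct and follows essentially the same route as the paper: conjugation by the unitary $U^\prime$ turns the pair $(H^\prime_0, H^\prime_\ell)$ into the block-diagonal pair $(\tilde H^\prime_0, \tilde H^\prime_0 + V_\ell)$, which decouples into the two scalar long-range problems $(\lambda_\pm, \lambda_\pm + V_\ell)$ on $\ell^2(\Z^2)$ handled by \cite{T}, after which $J := U^\prime \,\mathrm{diag}(\tilde J_+,\tilde J_-)\,(U^\prime)^*$ is reassembled exactly as in Subsection 3.1. Your added verification that $\lambda_\pm$ is smooth and non-critical on $\Gamma$ is precisely what the construction of $\kappa$ in Subsection 2.1 was designed to ensure, modulo one harmless slip: off $\operatorname{supp}\kappa$ the radicand equals $p^2$, which need not be $\geq \frac{\delta^2}{4}$, but it is still strictly positive because the condition $0 < E + \kappa(E)^2$ on $[0,\frac{\delta^2}{4})$ forces $E>0$ wherever $\kappa(E)=0$ there.
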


\begin{thm}\label{Step3}
There exist the wave operators
\begin{align}\label{WO}
W_s^{\prime,\pm}(\Gamma) := \slim e^{itH} e^{-itH^\prime_\ell} E_{H^\prime_\ell}^{\text{ac}}(\Gamma),
\end{align}
where $E_{H^\prime_\ell}^{\text{ac}}(\Gamma)$ denotes the spectral projection of $H^\prime_\ell$ onto the absolutely continuous subspace in $\Gamma$, and $\operatorname{Ran} W_s^{\prime,\pm}(\Gamma) = E_{H}(\Gamma) \mathcal{H}_{\text{ac}}(H)$.
\end{thm}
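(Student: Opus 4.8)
The plan is to read Theorem \ref{Step3} as a \emph{short-range} scattering problem for the pair $(H,H_0^\prime{+}U^\prime V_\ell(U^\prime)^*)=(H,H^\prime_\ell)$ localized on $\Gamma$, and to obtain both existence and completeness of $W_s^{\prime,\pm}(\Gamma)$ from Kato's smooth perturbation theory, fed by the limiting absorption principle of Appendix \ref{Mourre section}. Concretely I would (a) decompose $H-H^\prime_\ell$, (b) show the genuinely short-range pieces are short-range using Lemma \ref{PD calculus}, (c) dispose of the non-decaying Fourier-multiplier remainder $H_0-H^\prime_0$ by spectral separation, and (d) run the smooth method in both time directions.

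First I would write $V=V_\ell I+V_s$ with $V_s=\operatorname{diag}(V_{s,1},V_{s,2})$ (short-range by Assumption \ref{ass}), so that
\[
H-H^\prime_\ell=(H_0-H^\prime_0)+V_s+\bigl(V_\ell-U^\prime V_\ell (U^\prime)^*\bigr).
\]
The middle defect is $V_\ell-U^\prime V_\ell(U^\prime)^*=[V_\ell,U^\prime](U^\prime)^*$, exactly the term flagged as intractable for the singular $U$. Now that $U^\prime(\xi)$ is smooth with bounded $\xi$-derivatives, I would apply Lemma \ref{PD calculus} entrywise with $m_1=0$ (symbol $U^\prime(\xi)$) and $m_2=1+\rho$ (from $|\tilde\partial_{x_j}V_\ell[x]|\le C\langle x\rangle^{-1-\rho}$) to conclude that $\langle x\rangle^{1+\rho}[V_\ell,U^\prime]$ is bounded, whence $V_\ell-U^\prime V_\ell(U^\prime)^*$ is short-range. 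This is the step the modification of $H_0$ into $H^\prime_0$ was engineered to enable, and it is the conceptual heart of the argument.

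Next I would handle $K:=H_0-H^\prime_0$, whose symbol $K(\xi)=\operatorname{diag}(-\kappa(p(\xi)^2),\kappa(p(\xi)^2))$ is supported in $\{p(\xi)^2<\delta^2/4\}$, where the eigenvalues of $H^\prime_0(\xi)$ satisfy $|\lambda_\pm(\xi)|<\delta/2$. Since $\Gamma$ lies at distance $\ge\delta$ from $0$, the product of symbols vanishes, giving $K\chi(H^\prime_0)=0$ for every $\chi\in C^\infty_c(\Gamma)$, and $\chi(H^\prime_0)=\chi(H_0)$ by Lemma \ref{step1}. Writing $K\chi(H^\prime_\ell)=K[\chi(H^\prime_\ell)-\chi(H^\prime_0)]$ through the Helffer--Sjöstrand formula and commuting $K$ past the potential $W=U^\prime V_\ell(U^\prime)^*$, each commutator produces a factor $\tilde\partial W=O(\langle x\rangle^{-1-\rho})$ controlled by Lemma \ref{PD calculus}; the momentum separation between $\operatorname{supp}\kappa$ and the energy shell of $\Gamma$ thereby upgrades to genuine short-range decay of $K\chi(H^\prime_\ell)$. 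Combining the three contributions, the localized perturbation $\chi(H)(H-H^\prime_\ell)\chi(H^\prime_\ell)$ factors as $\langle x\rangle^{-(1+\rho)/2}B\langle x\rangle^{-(1+\rho)/2}$ with $B$ bounded and $(1+\rho)/2>1/2$. Because $\Gamma\Subset[-3,3]\setminus\{0,\pm1,\pm3\}$ avoids the Dirac value $0$ and the critical values $\pm1,\pm3$ of $\pm p$, the Mourre estimate of Appendix \ref{Mourre section} gives the limiting absorption principle for both $H$ and $H^\prime_\ell$ on $\Gamma$; hence $\langle x\rangle^{-s}$ is $H$-smooth and $H^\prime_\ell$-smooth on $\Gamma$ for $s>1/2$, and $H$ has no singular continuous spectrum in $\Gamma$. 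The smooth method then yields existence of $W_s^{\prime,\pm}(\Gamma)$, and running the identical argument with $H$ and $H^\prime_\ell$ exchanged produces the inverse wave operators, giving $\operatorname{Ran}W_s^{\prime,\pm}(\Gamma)=E_H(\Gamma)\mathcal{H}_{ac}(H)$.

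The main obstacle is the term $K=H_0-H^\prime_0$: it is bounded but translation invariant, hence not short-range in any naive sense, and since $V_\ell$ decays only like $\langle x\rangle^{-\rho}$ with $\rho$ possibly small one cannot estimate $\chi(H^\prime_\ell)-\chi(H^\prime_0)$ by its norm. The resolution rests on the exact vanishing $K\chi(H^\prime_0)=0$ together with the derivative gain of Lemma \ref{PD calculus}, which converts the spectral separation between $\operatorname{supp}\kappa$ and $\Gamma$ into the short-range decay needed to place $K\chi(H^\prime_\ell)$ alongside $V_s$ and the conjugation defect.
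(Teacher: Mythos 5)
Your proposal is correct in substance and rests on the same backbone as the paper's proof: the Cook--Kuroda method fed by Kato smoothness from the Mourre/LAP machinery of Appendix \ref{Mourre section}, with Lemma \ref{PD calculus} supplying the $\la x\ra^{-1-\rho}$ gain on the conjugation defect. The arrangement, however, is genuinely different. The paper never confronts $K=H_0-H^\prime_0$ as a perturbation: it differentiates $e^{itH}\chi(H)\chi(H^\prime_0)\chi(H^\prime_\ell)e^{-itH^\prime_\ell}u$, so that the free parts cancel exactly through $H_0\chi(H^\prime_0)=\chi(H^\prime_0)H^\prime_0$ (equivalently, your $K\chi(H^\prime_0)=0$), and the two long-range terms collapse into the single commutator $[V_\ell,\chi(H^\prime_0)(U^\prime)^*]U^\prime$, handled by one application of Lemma \ref{PD calculus} with $m_1=0$, $m_2=1+\rho$. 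The price is the compactness insertion (replacing $\chi(H^\prime_\ell)^2$ by $\chi(H)\chi(H^\prime_0)$ modulo compacts, killed by weak convergence on $\mathcal{H}_{ac}(H^\prime_\ell)$) --- a step you also need but did not state, since your bracket $\chi(H)(H-H^\prime_\ell)\chi(H^\prime_\ell)$ likewise differs from the unlocalized expression by a compact-times-weakly-null error. What your route buys is a transparent three-term decomposition of $H-H^\prime_\ell$; what it costs is that $K\chi(H^\prime_\ell)$ must be proved short-range by hand, and that is the one place your sketch is too glib.

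Concretely, after one Helffer--Sj\"ostrand expansion of $K[\chi(H^\prime_\ell)-\chi(H^\prime_0)]$, commuting the multiplier $K(H^\prime_0-z)^{-1}$ past $W=U^\prime V_\ell (U^\prime)^*$ produces a short-range commutator \emph{plus} a leading term $WK(H^\prime_0-z)^{-1}(H^\prime_\ell-z)^{-1}$ carrying only the decay $\la x\ra^{-\rho}$ of $W$ itself, not of $\tilde\partial W$; since $\rho\le 1$, a single commutation does not suffice. One must iterate the resolvent expansion on the order of $1/\rho$ times, observing that the fully expanded leading terms integrate to multiples of $W^k K\chi^{(k)}(H^\prime_0)$, which vanish because $\operatorname{supp}\chi^{(k)}\subset\operatorname{supp}\chi\Subset\Gamma$, until the remainder carries total decay $N\rho\ge 1+\rho$ distributed by weighted resolvent bounds. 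Note also that $[K,H^\prime_0]\neq 0$ (the symbols $\operatorname{diag}(-\kappa,\kappa)$ and $H^\prime_0(\xi)$ do not commute where $\kappa\neq 0$), so it is the multipliers $K(H^\prime_0-z)^{-k}$, not $K$ alone, that get commuted past $W$. With this elaboration your argument closes; the paper's insertion of $\chi(H^\prime_0)$ into the Cook bracket is precisely the device that makes all of it unnecessary. Finally, for the density argument you should restrict $\chi$ to $C_c^\infty(\Gamma\setminus\sigma_{\text{pp}})$ and invoke finiteness of eigenvalues (Proposition \ref{Kato smooth} i)), as the paper does.
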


\begin{proof}[Proof of Theorem \ref{Main}]
It remains to prove $W_{J}^{\pm}(\Gamma) = W_s^{\prime,\pm}(\Gamma) W_{J,\ell}^{\prime,\pm}(\Gamma)$.
For $u\in \mathcal{H}$, it follows from Lemma \ref{step1} that
\begin{align*}
e^{itH} J e^{-itH_0} E_{H_0}(\Gamma) u
 =& e^{itH} J e^{-itH_0^\prime} E_{H_0^\prime}(\Gamma) u \\
 =& e^{itH} e^{-itH^\prime_\ell} \cdot e^{itH^\prime_\ell} J e^{-itH_0^\prime} E_{H_0^\prime}(\Gamma) u .
\end{align*}
Note that by Theorem \ref{Step2} there exist $T_\pm > 0$ such that for $\pm t > T_\pm$, $e^{itH^\prime_\ell} J e^{-itH_0^\prime} E_{H_0^\prime}(\Gamma) u = W_{J,\ell}^{\prime,\pm}(\Gamma) u + r_\pm(t)$ and $\| r_\pm(t) \|_\mathcal{H} \to 0$ as $t\to\pm\infty$.
Thus we have for $\pm t > T_\pm$
\begin{align}\label{err}
& \| e^{itH} J e^{-itH_0} E_{H_0}(\Gamma) u - W_s^{\prime,\pm}(\Gamma) W_{J,\ell}^{\prime,\pm}(\Gamma) u \|_\mathcal{H} \\
 \leq& \left\| \left(e^{itH} e^{-itH^\prime_\ell} - W_s^{\prime,\pm}(\Gamma)\right) W_{J,\ell}^{\prime,\pm}(\Gamma)u \right\|_\mathcal{H} + \| r_\pm(t)) \|_\mathcal{H} . \nonumber
\end{align}
Since $W_{J,\ell}^{\prime,\pm}(\Gamma)u \in \operatorname{Ran} W_{J,\ell}^{\prime,\pm}(\Gamma) = E_{H^\prime_\ell}(\Gamma) \mathcal{H}_{ac}(H^\prime_\ell)$, Theorem \ref{Step3} implies that (\ref{err}) tends to $0$ as $t\to\pm\infty$.
\end{proof}

In the following, we prove Theorems \ref{Step2} and \ref{Step3}.

\subsection{Proof of Theorem \ref{Step2}}\label{long-range}
 We reduce a long-range scattering problem on $\mathcal{H}=\ell^2(\mathbb{Z}^2;\mathbb{C}^2)$ into that on $\ell^2(\Z^2)$, which  is considered in \cite{T}.

The existence and completeness of (\ref{mWO in step2}) are equivalent to those of
\begin{align}
\tilde W_{\tilde J, \ell}^{\prime,\pm}(\Gamma)&:=
\slim e^{it\tilde H_\ell^\prime} \tilde J e^{-it\tilde H_0^\prime} E_{\tilde H_0^\prime}(\Gamma) ,
\end{align}
where $\tilde J = \left(U^\prime\right)^* J U^\prime$ and
\begin{align}
\tilde H_\ell^\prime &:= \left(U^\prime\right)^* H_\ell^\prime U^\prime = \tilde H_0^\prime + V_\ell .
\end{align}
Indeed, a direct calculus implies
\begin{align} \label{WO to dWO}
W_{J,\ell}^{\prime,\pm}(\Gamma)= U^\prime \tilde W_{\tilde J, \ell}^{\prime,\pm}(\Gamma) \left(U^\prime\right)^*.
\end{align}
Set $\tilde J = \begin{pmatrix} \tilde J_+ & 0 \\ 0 & \tilde J_- \end{pmatrix}$, $\tilde J_\pm \in \mathcal{B}(\ell^2(\mathbb{Z}^2))$.
 Then the scattering problem of operators on $\mathcal{H}$ is reduced to that on $\ell^2(\mathbb{Z}^2)$:
\begin{align}
& \tilde W_{\tilde J, \ell}^\pm(\Gamma) \\
=& \slim 
\begin{pmatrix}
	e^{it(\lambda_+ + V_\ell)} \tilde J_+ e^{-it\lambda_+} E_{\lambda_+}(\Gamma) & 0 \\
	0 & e^{it(\lambda_- + V_\ell)} \tilde J_- e^{-it\lambda_-} E_{\lambda_-}(\Gamma)
\end{pmatrix}
. \nonumber
\end{align}
Therefore we obtain the following theorem by \cite{T}.

\begin{thm} \label{diagonalized WO}
There exist Fredholm operators $\tilde J_\#$, $\# \in \{ +,-\}$, of the form
\begin{align}\label{tildeJ}
\tilde J_\# v[x] = (2\pi)^{-1} \int_{\mathbb{T}^2} e^{i\varphi_\#(x,\xi)} Fv(\xi) d\xi, \quad v \in \ell^2(\mathbb{Z}^2),
\end{align}
such that the modified wave operators
\begin{align}
W^\pm_{\ell, \#}(\Gamma) := \slim e^{it(\lambda_\# + V_\ell)} \tilde J_\# e^{-it\lambda_\#} E_{\lambda_\#}(\Gamma)
\end{align}
exist and they are partial isometries from $E_{\lambda_\#} (\Gamma) \mathcal{H}_{ac}(\lambda_\#)$ onto $E_{\lambda_\#} (\Gamma) \mathcal{H}_{ac}(\lambda_\#)$.
\end{thm}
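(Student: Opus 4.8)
By the block-diagonalization carried out just above, Theorem~\ref{diagonalized WO} is a purely scalar long-range scattering statement on $\ell^2(\Z^2)$, and the plan is to prove it by direct appeal to \cite{T}. For each $\#\in\{+,-\}$ the pair to be scattered is $(\lambda_\#,\,\lambda_\#+V_\ell)$, where $\lambda_\#=F^*\lambda_\#(\cdot)F$ has the \emph{smooth} symbol $\lambda_\pm(\x)=\pm(\kappa(p(\x)^2)^2+p(\x)^2)^{1/2}$ from (\ref{ev}), and $V_\ell$ is the scalar long-range potential of Assumption~\ref{ass}. Thus the entire task is to verify that the data $(\lambda_\#,V_\ell,\Gamma)$ satisfy the hypotheses under which \cite{T} constructs an Isozaki--Kitada modifier, proves existence of the modified wave operators, and establishes the partial-isometry and asymptotic-completeness properties. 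The operator $\tilde J_\#$ is then exactly the modifier supplied by \cite{T}: the oscillatory integral (\ref{tildeJ}) whose phase $\varphi_\#$ solves the eikonal equation $\lambda_\#(\nabla_x\varphi_\#)+V_\ell(x)=\lambda_\#(\x)$ in an outgoing/incoming region.

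First I would check smoothness of the symbol. Since $p(\x)^2=|\alpha(\x)|^2=|1+e^{i\x_1}+e^{i\x_2}|^2$ is a trigonometric polynomial, it is smooth on all of $\T^2$, and the defining condition $0<E+\kappa(E)^2<\tfrac{\delta^2}{4}$ forces $\kappa(p^2)^2+p^2>0$ everywhere; hence $\lambda_\pm\in C^\infty(\T^2)$. This is precisely the purpose of the $\kappa$-device: it removes the conical singularity of $p$ at the Dirac points $\alpha^{-1}(0)$, so that, unlike the genuine symbol $p$, the modified symbol $\lambda_\#$ is a globally smooth dispersion relation to which the pseudodifferential and eikonal machinery of \cite{T} applies. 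The long-range bounds on $V_\ell$ are inherited verbatim from Assumption~\ref{ass}, placing $V_\ell$ in the admissible class of \cite{T}.

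The decisive point is non-degeneracy of the energy shell. Because $\lambda_+<\delta/2$ wherever $p<\delta/2$ and $\lambda_+=p$ wherever $p\ge\delta/2$ (there $\kappa=0$), and since $\Gamma\subset\{|\lambda|\ge\delta\}$, the shell $\lambda_+^{-1}(\Gamma)$ lies entirely in the region $\{p>\delta/2\}$ on which $\lambda_+=p$; the same holds for $\lambda_-=-p$. Hence the critical points of $\lambda_\#$ meeting this shell are those of $p$, namely $\operatorname{Cr}$, with critical values $p=3$ at $(0,0)$ and $p=1$ at the remaining three points. The $\kappa$-device does create a spurious critical point at each Dirac point, but its value equals $\kappa(0)\in(0,\delta/2)$ and so is strictly below $\Gamma$. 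Consequently $\Gamma\Subset[-3,3]\backslash\{0,\pm1,\pm3\}$ guarantees that $\lambda_\#^{-1}(\Gamma)$ contains no critical point, so $\nabla_\x\lambda_\#\ne0$ there; for the channel with $\lambda_\#^{-1}(\Gamma)=\emptyset$ one has $E_{\lambda_\#}(\Gamma)=0$ and the assertion is vacuous. This non-vanishing group velocity on the shell is exactly what lets \cite{T} solve the eikonal equation and run the Isozaki--Kitada construction, after which the existence, the Fredholm and partial-isometry properties, and the asymptotic completeness of $W^\pm_{\ell,\#}(\Gamma)$ all follow from \cite{T}.

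The principal obstacle I anticipate is not the scattering argument, which is delegated to \cite{T}, but the precise matching of hypotheses. One must confirm that avoiding the critical \emph{values} $\{0,\pm1,\pm3\}$ is genuinely sufficient, i.e.\ that \cite{T} requires only a non-vanishing gradient of $\lambda_\#$ on $\lambda_\#^{-1}(\Gamma)$ together with whatever non-degeneracy (for instance a Morse or convexity condition on $\lambda_\#$ near the shell) its eikonal and stationary-phase steps demand, and that the artificial critical value produced by $\kappa$ is harmless because it is quarantined in $(0,\delta/2)$. Checking that the graphene dispersion meets any such auxiliary condition, uniformly over the compact energy shell above $\Gamma$, is the step requiring the most care.
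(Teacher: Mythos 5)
Your proposal matches the paper's own route exactly: the paper proves Theorem~\ref{diagonalized WO} simply by invoking \cite{T} for the scalar pairs $(\lambda_\#,\lambda_\#+V_\ell)$, with $\varphi_\#$ the Isozaki--Kitada phase solving the eikonal equation with a smooth extension $\tilde V_\ell$ of $V_\ell$. Your explicit verification that the $\kappa$-device makes $\lambda_\pm$ globally smooth, that $\lambda_\#=\pm p$ on $\{p>\delta/2\}\supset\lambda_\#^{-1}(\Gamma)$, and that $\Gamma\Subset[-3,3]\backslash\{0,\pm1,\pm3\}$ keeps the shell free of critical points (including the spurious value $\kappa(0)\in(0,\delta/2)$) is precisely the content the paper leaves implicit in its Subsection~\ref{subsec2.1} construction, so the argument is correct and essentially identical.
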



Note that each $\varphi_\#(x,\xi)$ in (\ref{tildeJ}) is constructed as a smooth function on $\mathbb{R}^2 \times \mathbb{T}^2$ which solves the eikonal equation
\begin{align}
\lambda_\#(\nabla_x \varphi_\#(x,\xi)) + \tilde V_\ell (x) = \lambda_\#(\xi)
\end{align}
on the outgoing and incoming regions, where $\tilde V_\ell \in C^\infty(\R^2)$ is a suitable extension of $V_\ell$. For detailed properties of $J_\pm$ and $\varphi_\pm$, see \cite{T}.

Let $J:= U^\prime \tilde J \left( U^\prime \right)^*$, $\tilde J := \begin{pmatrix} \tilde J_+ & 0 \\ 0 & \tilde J_- \end{pmatrix}$.
Then using Theorem \ref{diagonalized WO} and (\ref{WO to dWO}), we obtain Theorem \ref{Step2}.
\qed

\subsection{Proof of Theorem \ref{Step3}}\label{short-range}

Theorem \ref{Step3} is proved by Proposition \ref{Kato smooth} and the Cook-Kuroda method. The proof of the next proposition is given in Appendix \ref{Mourre section}.

\begin{prop} \label{Kato smooth}
$i)$ $H$ and $H^\prime_\ell$ have at most finite discrete eigenvalues in $\Gamma$ with counting their multiplicities. \\
$ii)$ Let $s > \frac{1}{2}$ and $\chi \in C_{c}^\infty(\Gamma \backslash \sigma_{\text{pp}}(H))$ $($resp.\ $\chi \in C_{c}^\infty(\Gamma \backslash \sigma_{\text{pp}}(H^\prime_\ell)) )$. Then $\langle x\rangle^{-s} \chi(H)$ $($resp.\ $\langle x\rangle^{-s} \chi(H^\prime_\ell))$ is $H$$($resp.\ $H^\prime_\ell)$-smooth.
\end{prop}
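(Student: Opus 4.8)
The plan is to derive both assertions from Mourre's commutator method applied to $H$ and to $H^\prime_\ell$ on energy intervals inside $\Gamma$. Since $\Gamma \Subset [-3,3]\backslash\{0,\pm1,\pm3\}$, the closure $\overline\Gamma$ is compact and stays away from the threshold set, so it suffices to prove, for each $\l_0\in\overline\Gamma$, a Mourre estimate on a small interval $I\ni\l_0$; finiteness of eigenvalues and the limiting absorption principle (LAP) on each $I$ then follow from the abstract theory, and a finite subcover of $\overline\Gamma$ yields i) and the global weighted resolvent bound needed for ii). The thresholds are exactly the excluded points: the critical values of $p$ are $p(\operatorname{Cr})=\{1,3\}$ (Van Hove points) and $p=0$ at the Dirac points $\alpha^{-1}(0)$, so the critical values of $\l_\pm=\pm p$ are $\{0,\pm1,\pm3\}$.

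First I would fix the conjugate operator. In the diagonalizing frame, where $\tilde H^\prime_\ell=\operatorname{diag}(\l_++V_\ell,\l_-+V_\ell)$, I take the block-diagonal generator of ``velocity'' translations $A=\operatorname{diag}(A_+,A_-)$ with
\begin{align*}
A_\# = \tfrac12\bigl(x\cdot v_\# + v_\#\cdot x\bigr), \qquad v_\# := F^*(\nabla_\x\l_\#)(\x)F ,
\end{align*}
where $x$ is the position operator on $\Z^2$. Because $\l_\pm=\pm p$ and hence $U^\prime=U$ on $\{p>\tfrac{\d}{2}\}$, which contains the momentum shell $p^{-1}(\overline\Gamma)$, the same $A$ also diagonalizes $H_0$ there, so it serves for $H=H_0+V$ as well up to terms supported in $\{p<\tfrac{\d}{2}\}$ that are harmless after energy localization. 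The principal commutator is the Fourier multiplier $[\l_\#,iA_\#]=|\nabla_\x\l_\#(\x)|^2$, which is strictly positive away from $\operatorname{Cr}$; thus on $p^{-1}(I)$ with $I$ avoiding the critical values we obtain $|\nabla_\x\l_\#|^2\geq c>0$.

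Next I would verify the regularity and relative compactness that promote this to a genuine Mourre estimate. Assumption \ref{ass} gives $|\tilde\partial^\a V_\ell|\leq C_\a\la x\ra^{-|\a|-\rho}$ and $|V_{s,j}|\leq C\la x\ra^{-1-\rho}$, which is precisely the input to Lemma \ref{PD calculus}: the commutators $[V,iA]$ and $[[V,iA],iA]$ are expressed through finite differences of $V$ and are therefore bounded and relatively $H$-compact (likewise for $H^\prime_\ell$), so $H$ and $H^\prime_\ell$ are of class $C^{1,1}(A)$, indeed $C^2(A)$. Combining positivity of the principal symbol with compactness of the remainder yields, with $K$ compact and $S\in\{H,H^\prime_\ell\}$,
\begin{align*}
E_S(I)\,[S,iA]\,E_S(I)\geq c\,E_S(I) + K , \qquad I\subset\Gamma .
\end{align*}
The virial theorem then forces the eigenvalues in $I$ to have finite total multiplicity, proving i) after passing to a finite cover, and the abstract Mourre theorem gives $\sup\,\|\la A\ra^{-s}(S-z)^{-1}\la A\ra^{-s}\|<\infty$ for $s>\tfrac12$, the supremum over $\operatorname{Re}z\in I^\prime\Subset I\backslash\sigma_{\mathrm{pp}}(S)$, $\operatorname{Im}z\neq0$.

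Finally I would convert the $\la A\ra$-weights into $\la x\ra$-weights and invoke Kato's criterion. On the shell $p^{-1}(I)$ the velocity $\nabla_\x\l_\#$ is bounded and bounded below, so $A$ is comparable to $x$ there; using Lemma \ref{PD calculus} to commute $\la x\ra^{s}$ through the energy cutoff one shows that $\la x\ra^{-s}\chi(S)\la A\ra^{s}$ is bounded for $\chi\in C_c^\infty(\Gamma\backslash\sigma_{\mathrm{pp}}(S))$, which transfers the $\la A\ra$-weighted LAP to $\sup_{\e>0}\|\la x\ra^{-s}\chi(S)\operatorname{Im}(S-\l\mp i\e)^{-1}\chi(S)\la x\ra^{-s}\|<\infty$, locally uniformly in $\l$. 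By Kato's theorem this is equivalent to $\la x\ra^{-s}\chi(S)$ being $S$-smooth, which is ii). I expect the main obstacle to be constructing a conjugate operator that behaves well near both singular features at once — the band-touching points $\alpha^{-1}(0)$, where $p$ and $U$ are singular, and the Van Hove thresholds $\{\pm1,\pm3\}$, where $\nabla_\x p$ degenerates — and then checking the $C^{1,1}(A)$ regularity under the weak decay $\rho\in(0,1]$; the regularized data $\l_\pm$, $U^\prime$ of Subsection \ref{subsec2.1} and the difference-operator calculus of Lemma \ref{PD calculus} are exactly what render this tractable.
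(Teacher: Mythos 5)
Your overall architecture coincides with the paper's: the same conjugate operator $A=U^\prime\operatorname{diag}(\tilde A_+,\tilde A_-)(U^\prime)^*$ built from the group velocities $\nabla_\xi\lambda_\pm$, the same strict Mourre estimate for $H_0$ and $H^\prime_0$ with constant $c=\min_{\#}\inf_{\lambda_\#^{-1}(\Gamma)}|\nabla_\xi\lambda_\#|^2$, compactness/boundedness of the commutators with the perturbations via Lemma \ref{PD calculus}, the abstract Mourre theorem, and Kato's characterization (\ref{K smooth 2}) to conclude ii). However, one step fails as written: you assert that $[[V,iA],iA]$ is bounded, whence $C^2(A)\subset\mathcal{C}^{1,1}(A)$. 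For the short-range part, Assumption \ref{ass} provides only the pointwise bound $|V_{s,j}[x]|\leq C\la x\ra^{-1-\rho}$, with no difference estimates, so there is no cancellation available in the double commutator: it contains the term $AV_sA$, which is of size $\la x\ra\cdot\la x\ra^{-1-\rho}\cdot\la x\ra=\la x\ra^{1-\rho}$ and is unbounded for every $\rho\in(0,1)$ --- precisely the long-range regime the proposition is designed for. (For $V_\ell$ the difference estimates of all orders would save you, but not for $V_s$.) The paper circumvents this: it proves only the weighted \emph{first}-commutator bound $\la x\ra^{\rho}[W,iA]\in\mathcal{B}(\mathcal{H})$ for $W=V_s,\ V_\ell,\ U^\prime V_\ell(U^\prime)^*$ (Lemma \ref{comm}), and then obtains $\mathcal{C}^{1,1}(A)$ directly from the dyadic criterion of \cite[Theorem 6.1]{BdM-S}, checking $\int_1^\infty\bigl\|\theta\bigl(\tfrac{\Lambda}{\lambda}\bigr)[W,iA]\bigr\|\,\frac{d\lambda}{\lambda}<\infty$ with $\Lambda=\la x\ra$; since Theorem \ref{Mourre thm} requires only $\mathcal{C}^{1,1}(A)$, this suffices. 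You correctly flag this as ``the main obstacle'' in your closing sentence, but the body of your argument does not actually overcome it; replacing the $C^2$ claim by the $\mathcal{C}^{1,1}$ dyadic criterion is the necessary repair.

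A second, non-fatal divergence concerns the weights. Having the LAP with $\la A\ra^{-s}$ weights, you propose to pass to $\la x\ra^{-s}$ weights by proving $\la x\ra^{-s}\chi(S)\la A\ra^{s}\in\mathcal{B}(\mathcal{H})$; for fractional $s>\frac12$ this needs an interpolation argument of its own. The paper's route is shorter: the abstract LAP holds in $\mathcal{B}(B,B^*)$ with $B=(\mathcal{D}(A),\mathcal{H})_{\frac12,1}$, and since $\mathcal{H}_1=\la x\ra^{-1}\mathcal{H}\subset\mathcal{D}(A)$ (the symbol of $A$ is first order), real interpolation gives $\mathcal{H}_s\subset(\mathcal{H}_1,\mathcal{H})_{\frac12,1}\subset B$ for $s>\frac12$, with no commuting of weights through energy cutoffs at all. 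Two smaller omissions in your sketch, both standard: essential self-adjointness of $A$ (the paper invokes Nelson's commutator theorem with $\la x\ra$), and the passage from the strict free Mourre estimate to the perturbed inequality $E_S(I)[S,iA]E_S(I)\geq cE_S(I)+K$, which uses compactness of $\chi(S)-\chi(H^\prime_0)$ via the Helffer--Sj\"ostrand formula in addition to compactness of the commutators with the potentials.
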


According to Proposition \ref{Kato smooth} i) and a density argument, it suffices to show the existence of wave operators
\begin{align}
&\lim_{t\to\pm\infty} e^{itH} e^{-itH^\prime_\ell} u , \label{direct WO} \\
&\lim_{t\to\pm\infty} e^{itH^\prime_\ell} e^{-itH} v \label{inverse WO}
\end{align}
for $u\in \mathcal{H}_{\text{ac}}(H^\prime_\ell)$ and $v\in \mathcal{H}_{\text{ac}}(H)$ such that
\begin{align}\label{cutoff}
\chi(H^\prime_\ell)u = u, \quad \psi(H)v = v
\end{align}
with $\chi \in C_{c}^\infty(\Gamma \backslash \sigma_{\text{pp}}(H^\prime_\ell))$ and $\psi \in C_{c}^\infty(\Gamma \backslash \sigma_{\text{pp}}(H))$.
We prove the existence of (\ref{direct WO}) as $t\to\infty$ only. The other cases are proved similarly.

By (\ref{cutoff}), we have
\begin{align} \label{compactness argument}
e^{itH} e^{-itH^\prime_\ell} u =& e^{itH} \chi(H^\prime_\ell)^3 e^{-itH^\prime_\ell} u \\
=& e^{itH} \chi(H)\chi(H^\prime_0)\chi(H^\prime_\ell) e^{-itH^\prime_\ell} u \nonumber \\
&+ e^{itH} \left( \chi(H^\prime_\ell)^2-\chi(H)\chi(H^\prime_0) \right)\chi(H^\prime_\ell) e^{-itH^\prime_\ell} u . \nonumber 
\end{align}
Since $H-H_0=V_\ell$ and $H^\prime_\ell - H^\prime_0=\left(U^\prime\right)^* V_\ell U^\prime$ are compact operators, the Helffer-Sj\"ostrand formula implies that $\chi(H) - \chi(H_0) = \chi(H) - \chi(H^\prime_0)$ and $\chi(H^\prime_\ell)-\chi(H^\prime_0)$ are compact. Thus $\chi(H^\prime_\ell)^2-\chi(H)\chi(H^\prime_0)$ is also a compact operator.
Note that $u\in\mathcal{H}_{ac}(H^\prime_\ell)$ implies $e^{-itH^\prime_\ell}u \to 0$ weakly as $t\to\infty$.
Thus the last term of (\ref{compactness argument}) converges to $0$ as $t\to\infty$, and 
it suffices to prove the existence of the limit
\begin{align*}
\lim_{t\to\infty} e^{itH} \chi(H)\chi(H^\prime_0)\chi(H^\prime_\ell) e^{-itH^\prime_\ell} u.
\end{align*}

Now we use the Cook-Kuroda method. First note that
\begin{align*}
&e^{itH} \chi(H)\chi(H^\prime_0)\chi(H^\prime_\ell) e^{-itH^\prime_\ell} u
 - e^{it^\prime H} \chi(H)\chi(H^\prime_0)\chi(H^\prime_\ell) e^{-it^\prime H^\prime_\ell} u \\
&= \int_{t^\prime}^t \frac{d}{ds}(e^{isH} \chi(H)\chi(H^\prime_0)\chi(H^\prime_\ell) e^{-isH^\prime_\ell} u) ds .
\end{align*}
A direct calculus implies
\begin{align*}
&\frac{d}{dt}\left(e^{itH} \chi(H)\chi(H^\prime_0)\chi(H^\prime_\ell) e^{-itH^\prime_\ell} u \right) \\
&= i e^{itH} \chi(H) \left(H \chi(H^\prime_0) - \chi(H^\prime_0) H^\prime_\ell \right) \chi(H^\prime_\ell) e^{-itH^\prime_\ell} u \\
&= i e^{itH} \chi(H) \left(V_s\chi(H^\prime_0) + V_\ell \chi(H^\prime_0) - \chi(H^\prime_0) \left(U^\prime\right)^* V_\ell U^\prime \right) \chi(H^\prime_\ell) e^{-itH^\prime_\ell} u \\
&= i e^{itH} \chi(H) \left( V_s\chi(H^\prime_0) + [ V_\ell , \chi(H^\prime_0) \left(U^\prime\right)^*] U^\prime \right) \chi(H^\prime_\ell) e^{-itH^\prime_\ell} u \\
&= i e^{itH} (A_1^* B_1 + A_2^* B_2) e^{-itH^\prime_\ell} u ,
\end{align*}
where $\gamma:=\frac{1+\rho}{2}$ and
\begin{align*}
A_1&:=\langle x\rangle^{\gamma} V_s\chi(H), \quad
B_1:=\langle x\rangle^{-\gamma} \chi(H^\prime_0) \chi(H^\prime_\ell), \\
A_2&:=\langle x\rangle^{-\gamma} \chi(H), \quad
B_2:=\langle x\rangle^{\gamma} [ V_\ell , \chi(H^\prime_0) \left(U^\prime\right)^*] U^\prime \chi(H^\prime_\ell) .
\end{align*}
Then by a standard argument of short-range scattering theory (see, e.g., \cite{R-S}), it suffices to prove that each $A_j (B_j)$ is $H (H^\prime_\ell)$-smooth, respectively.
The $H$-smoothness of $A_1$ and $A_2$ is a direct consequence of Proposition \ref{Kato smooth}.
For the $H^\prime_\ell$-smoothness of $B_1$ and $B_2$, note that
\begin{align*}
B_1&=\langle x\rangle^{-\gamma} \chi(H^\prime_0) \langle x\rangle^{\gamma}
\cdot \langle x\rangle^{-\gamma} \chi(H^\prime_\ell) , \\
B_2&=\langle x\rangle^{\gamma} [ V_\ell , \chi(H^\prime_0) \left(U^\prime\right)^*] U^\prime \langle x\rangle^{\gamma} \cdot \langle x\rangle^{-\gamma} \chi(H^\prime_\ell) .
\end{align*}
Then it follows from Lemma \ref{PD calculus} and Assumption \ref{ass} that $\langle x \rangle^{-\gamma} \chi(H^\prime_0) \langle x \rangle^{\gamma}$ and $\langle x \rangle^{\gamma} [ V_\ell , \chi(H^\prime_0) \left(U^\prime\right)^*] U^\prime \langle x \rangle^{\gamma}$ are bounded.
Combining this and Proposition \ref{Kato smooth}, we obtain the $H^\prime_\ell$-smoothness of $B_1$ and $B_2$.
\qed

\appendix

\section{Mourre theory for $H$ and $H^\prime_\ell$, and the proof of Proposition \ref{Kato smooth}} \label{Mourre section}

In this appendix, we review the Mourre theory, the limiting absorption principle and Kato's smooth perturbation theory.
Let $\Gamma \Subset \sigma(H_0)\backslash \{0,\pm1,\pm3\}$ be an open interval as in Theorem \ref{Main}.
For a selfadjoint operator $A$ and $n \in \mathbb{N}$, let
\begin{align*}
C^n(A) = \{ S \in \mathcal{B}(\mathcal{H}) \mid \mathbb{R} \to \mathcal{B}(\mathcal{H}), t \mapsto e^{-itA} S e^{itA} \text{ is strongly of class $C^n$} \} ,
\end{align*}
and $C^\infty(A) := \cap_{n\in\mathbb{N}} C^n(A)$.
We denote by $\mathcal{C}^{1,1}(A)$ the set of the operators satisfying
\begin{align}
\int_0^1 \| e^{-itA} S e^{itA} + e^{itA} S e^{-itA} - 2S \| \frac{dt}{t^2} <\infty.
\end{align}
Note that $C^2(A) \subset \mathcal{C}^{1,1}(A) \subset C^1(A)$.
We denote by $B$ the Besov space $(\mathcal{D}(A),\mathcal{H})_{\frac{1}{2},1}$ obtained by real interpolation. We also denote by $B^*$ its dual. 
The definition of real interpolation is found in \cite[Section 2.3]{A-BdM-G}.

We recall the characterization of Kato smoothness; for a selfadjoint operator $H$ and an $H$-bounded operator $G$, we say that $G$ is $H$-smooth if
\begin{align}\label{K smooth1}
C_1 := \frac{1}{2\pi} \sup_{u\in D(H),\|u\|=1} \int_{\mathbb{R}} \left\|G e^{-itH} u \right\| dt < \infty .
\end{align}
It is known that there are other characterizations of $H$-smoothness and one of them is
\begin{align}\label{K smooth 2}
C_2 := \sup_{\lambda\in\R,\varepsilon>0} \| G \delta(\lambda,\varepsilon) G^* \| < \infty ,
\end{align}
moreover $C_1=C_2$, where $\delta(\lambda,\varepsilon):=\frac{1}{2\pi i} \left( (H-\lambda-i\varepsilon)^{-1} - (H-\lambda+i\varepsilon)^{-1} \right)$.
For other characterizations, see \cite{Kato}.

In order to prove Proposition \ref{Kato smooth}, we apply the two operators $H$ and $H^\prime_\ell$ to Theorem \ref{Mourre thm} described below with $I \Subset \Gamma$ 
 and a suitable conjugate operator $A$.
The following theorem is a standard result of the Mourre theory and is due to \cite[Proposition 7.1.3, Corollary 7.2.11, Theorem 7.3.1]{A-BdM-G}.

\begin{thm} \label{Mourre thm}
Let $S \in \mathcal{C}^{1,1}(A)$ and $I \subset \mathbb{R}$ be an open interval.
Suppose that there exist a constant $c>0$ and a compact operator $K$ on $\mathcal{H}$ such that
\begin{align} \label{Mourre ineq}
E_{S}(I) [S,iA] E_{S}(I) \geq c E_{S}(I) + K .
\end{align}
Then \\
i) $S$ has at most a finite number of eigenvalues in $I$ and each eigenvalues in $I$ has finite multiplicity. \\
ii) For any $\lambda \in I \backslash \sigma_{pp}(S)$, there exist the weak-* limits in $\mathcal{B}(B,B^*)$
\begin{align}
\operatorname{w*-}\lim_{\varepsilon \to +0} (S-\lambda \mp i\varepsilon)^{-1} ,
\end{align}
and the convergence is locally uniform in $I \backslash \sigma_{pp}(S)$.
In particular, for any $I^\prime \Subset I \backslash \sigma_{pp}(S)$, $S$ is purely absolutely continuous in $I^\prime$ and
\begin{align}
\sup_{\lambda\in I^\prime,\varepsilon>0} \| (S-\lambda \mp i\varepsilon)^{-1} \|_{\mathcal{B}(B,B^*)} <\infty.
\end{align}

\end{thm}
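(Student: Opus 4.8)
The plan is to deduce both assertions from the single inequality (\ref{Mourre ineq}) together with the regularity $S\in\mathcal{C}^{1,1}(A)$. Since the classes defined in the paper consist of bounded operators and $\mathcal{C}^{1,1}(A)\subset C^1(A)$, the commutator $[S,iA]$ extends to a bounded self-adjoint operator on $\mathcal{H}$, so (\ref{Mourre ineq}) is an inequality between bounded operators; I will use this throughout. Assertion i) is a compactness argument resting on the virial theorem, whereas ii) is the limiting absorption principle, obtained by first promoting (\ref{Mourre ineq}) to a strict estimate near each non-eigenvalue and then integrating a differential inequality for a regularized resolvent in the Besov scale.

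For i), I would first prove the virial theorem: every eigenvector $u$, $Su=\lambda u$ with $\lambda\in I$, satisfies $\langle u,[S,iA]u\rangle=0$. As $u$ need not lie in $\mathcal{D}(A)$, I would replace $A$ by the bounded regularizations $A_\tau:=A(1+i\tau A)^{-1}$; then $Su=\lambda u$ forces $\langle u,[S,iA_\tau]u\rangle=0$ for every $\tau$, and since $S\in C^1(A)$ one has $[S,iA_\tau]\to[S,iA]$ strongly, giving $\langle u,[S,iA]u\rangle=0$. Granting this, suppose $S$ had infinitely many eigenvalues in $I$ counted with multiplicity; then there is an orthonormal sequence $(u_n)$ of eigenvectors with eigenvalues in $I$, so $E_S(I)u_n=u_n$ and $u_n\to0$ weakly. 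Applying (\ref{Mourre ineq}) and the virial identity yields
\begin{align*}
0=\langle u_n,[S,iA]u_n\rangle\geq c+\langle u_n,Ku_n\rangle,
\end{align*}
and compactness of $K$ gives $\langle u_n,Ku_n\rangle\to0$, hence $0\geq c>0$, a contradiction. This rules out both infinitely many eigenvalues and infinite multiplicity, proving i).

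For ii), the first step removes the compact term. Fix $\lambda_0\in I\setminus\sigma_{pp}(S)$; since $\lambda_0$ is not an eigenvalue, $E_S(I_0)\to0$ strongly as the interval $I_0\ni\lambda_0$ shrinks to $\{\lambda_0\}$, so compactness of $K$ gives $\|E_S(I_0)KE_S(I_0)\|\to0$ and (\ref{Mourre ineq}) improves to $E_S(I_0)[S,iA]E_S(I_0)\geq c'E_S(I_0)$ with some $c'>0$ on a small $I_0$. The second step is the method of differential inequalities: for $z=\lambda+i\mu$ with $\lambda\in I_0$ and $\mu>0$ I would study the regularized resolvent $G_\e(z):=(S-z-i\e M)^{-1}$, where $M$ is a positive self-adjoint operator built from $E_S(I_0)[S,iA]E_S(I_0)$, together with the scalar quantities $F_\e(z):=\langle\varphi,G_\e(z)\varphi\rangle$ for $\varphi$ in the Besov space $B$. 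Differentiating $F_\e$ in the spectral parameter and using the strict estimate to dominate the imaginary part produces a closed differential inequality; its integration gives bounds on $\|G_\e(z)\|_{\mathcal{B}(B,B^*)}$ uniform in $\e$ and $\mu$. Letting $\e\to0$ identifies $G_\e(z)$ with $(S-z)^{-1}$, and the boundary values $\mu\to+0$ furnish the weak-$*$ limits in $\mathcal{B}(B,B^*)$, their local uniformity, and the uniform bound; pure absolute continuity on $I'\Subset I\setminus\sigma_{pp}(S)$ then follows by the usual Stone-formula argument.

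The main obstacle is the weakness of the hypothesis $\mathcal{C}^{1,1}(A)$. The classical differential inequality needs the iterated commutator $[[S,A],A]$ to be a bounded (or $S$-bounded) operator, which is guaranteed only under $C^2(A)$; under $\mathcal{C}^{1,1}(A)$ this double commutator exists merely in the weak sense measured by the integral $\int_0^1\|e^{-itA}Se^{itA}+e^{itA}Se^{-itA}-2S\|\,t^{-2}\,dt<\infty$. Making the remainder terms in the differential inequality integrable against this $dt/t^2$ is precisely what forces the passage from the polynomially weighted spaces $\mathcal{D}(\langle A\rangle^{s})$ (adequate under $C^2$) to the sharp real-interpolation space $B=(\mathcal{D}(A),\mathcal{H})_{1/2,1}$, and it is the delicate analytic core of the argument. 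Since this sharp optimality is exactly the content of \cite{A-BdM-G}, I would invoke that reference for the estimate rather than reconstruct the interpolation machinery here.
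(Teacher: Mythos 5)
Your proposal is correct and matches the paper's treatment: the paper offers no proof of this theorem at all, stating only that it is a standard result of Mourre theory due to \cite[Proposition 7.1.3, Corollary 7.2.11, Theorem 7.3.1]{A-BdM-G}, and your sketch (virial theorem via the regularization $A_\tau=A(1+i\tau A)^{-1}$ plus weak compactness for i), the strict Mourre estimate on shrinking intervals and the differential-inequality method for ii)) is exactly the standard argument underlying those cited results. You also correctly identify the passage from $C^2(A)$ to $\mathcal{C}^{1,1}(A)$ with the Besov space $B=(\mathcal{D}(A),\mathcal{H})_{\frac{1}{2},1}$ as the genuinely delicate step and defer it to the same reference \cite{A-BdM-G}, so your route coincides with the paper's.
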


We define the conjugate operator $A$ by
\begin{align}
&A := U^\prime \circ \tilde A \circ \left(U^\prime\right)^* \label{conjugate} \\
&\tilde A := 
\begin{pmatrix} 
	\tilde A_+ & 0 \\ 
	0 & \tilde A_-
\end{pmatrix}
, \label{conjugate 2} \\
&\tilde A_\pm := \frac{1}{2} \sum_{j=1}^2 \left( F^* (\partial_{\xi_j} \lambda_\pm) F \cdot x_j + x_j \cdot F^* (\partial_{\xi_j} \lambda_\pm) F \right), \label{conjugate 3}
\end{align}
where $U^\prime$ and $\lambda_\pm(\x)\in C^\infty(\T^2)$ are given by (\ref{ev}), (\ref{U prime}), (\ref{def of H prime}), (\ref{def of lambda}).
Then Nelson's commutator theorem with the positive selfadjoint operator $\langle x \rangle$ implies that $A$ is essentially selfadjoint on the Schwarz space on $\Z^2$ defined by $\mathcal{S}(\mathbb{Z}^2)= \{ u : \mathbb{Z}^2 \to \mathbb{C}^2 \mid \sup_{x\in\mathbb{Z}^2} \langle x \rangle^{N} |u[x]| < \infty \text{ for any } N \in \mathbb{N} \}$.

First we verify a relation between $A$ and the unperturbed operators $H_0$ and $H^\prime_0$.

\begin{lem}
Both $H_0$ and $H^\prime_0$ belong to $C^\infty(A)$. 
Moreover, let 
\begin{align*}
c:=\min\left\{\inf_{\xi\in \lambda_+^{-1}(\Gamma)} |\nabla_\xi \lambda_+(\xi)|^2, \inf_{\xi\in \lambda_-^{-1}(\Gamma)} |\nabla_\xi \lambda_-(\xi)|^2 \right\}.
\end{align*}
Then, $c>0$ and
\begin{align}
&E_{H_0}(\Gamma) [H_0,iA] E_{H_0}(\Gamma)
\geq c E_{H_0}(\Gamma) , \label{Mourre 1} \\
&E_{H^\prime_0}(\Gamma) [H^\prime_0,iA] E_{H^\prime_0}(\Gamma)
\geq c E_{H^\prime_0}(\Gamma) . \label{Mourre 2}
\end{align}
\end{lem}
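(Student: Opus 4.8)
The plan is to transport everything to the diagonal picture through the unitary $U^\prime$ of \eqref{U prime}. Since $A = U^\prime \tilde A (U^\prime)^*$ and $H^\prime_0 = U^\prime \tilde H^\prime_0 (U^\prime)^*$ with $\tilde H^\prime_0 = \mathrm{diag}(\lambda_+,\lambda_-)$ and $\tilde A = \mathrm{diag}(\tilde A_+,\tilde A_-)$, every iterated commutator of $H^\prime_0$ with $A$ equals $U^\prime$ times the corresponding commutator of $\tilde H^\prime_0$ with $\tilde A$ times $(U^\prime)^*$, and the latter splits into the two scalar problems for $\lambda_\pm$ against $\tilde A_\pm$ on $\ell^2(\Z^2)$. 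So first I would reduce the whole statement to assertions about the pairs $(\lambda_\pm,\tilde A_\pm)$.

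Next I would carry out the basic commutator computation on the Fourier side. Under $F$ the position operator $x_j$ acts as $i\partial_{\xi_j}$, so $\tilde A_\pm$ becomes the symmetrized first-order operator $\frac{i}{2}\sum_j \bigl((\partial_{\xi_j}\lambda_\pm)\partial_{\xi_j} + \partial_{\xi_j}\circ(\partial_{\xi_j}\lambda_\pm)\bigr)$; a direct calculation then gives $[\lambda_\pm, i\tilde A_\pm] = |\nabla_\xi\lambda_\pm|^2$ as a multiplication operator on $L^2(\T^2)$. This is the standard dilation-type identity. For the $C^\infty(A)$ claim, I would show by induction that every iterated commutator $\mathrm{ad}_{i\tilde A_\pm}^n(\lambda_\pm)$ is again multiplication by a smooth function on the compact torus $\T^2$ (each step applies one more smooth first-order differential expression, preserving smoothness and boundedness), hence bounded; this yields $\lambda_\pm \in C^\infty(\tilde A_\pm)$ and so $H^\prime_0 \in C^\infty(A)$.

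For the Mourre estimate for $H^\prime_0$, I would use $E_{H^\prime_0}(\Gamma) = U^\prime\,\mathrm{diag}(E_{\lambda_+}(\Gamma),E_{\lambda_-}(\Gamma))\,(U^\prime)^*$, where each $E_{\lambda_\pm}(\Gamma) = F^* \mathbf 1_{\lambda_\pm^{-1}(\Gamma)} F$ is a sharp cutoff in $\xi$. Conjugating the commutator identity, the compressed commutator is multiplication by $\mathbf 1_{\lambda_\pm^{-1}(\Gamma)}|\nabla_\xi\lambda_\pm|^2\mathbf 1_{\lambda_\pm^{-1}(\Gamma)}$ on each channel, which is bounded below by $c\,\mathbf 1_{\lambda_\pm^{-1}(\Gamma)}$; this gives \eqref{Mourre 2} with the stated constant and with no compact remainder. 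To see $c>0$ I would argue that on $\lambda_\pm^{-1}(\Gamma)$ one has $|\lambda_\pm|\geq\delta$, so $\lambda_\pm = \pm p$ there by the defining property of $\kappa$; the critical values of $p$ are exactly $\{0,1,3\}$ (computed from $\mathrm{Cr}$ together with $\alpha^{-1}(0)$), and since $\Gamma \Subset [-3,3]\setminus\{0,\pm1,\pm3\}$ the compact set $\lambda_\pm^{-1}(\Gamma)$ contains no critical point, so the infimum of $|\nabla_\xi\lambda_\pm|^2$ is attained and strictly positive.

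Finally, to pass to $H_0$ I would invoke Lemma \ref{step1}: since $E_{H_0}(\Gamma) = E_{H^\prime_0}(\Gamma)$ and $H_0$, $H^\prime_0$ share their spectral data on $\Gamma$, we have $H_0 E_{H_0}(\Gamma) = H^\prime_0 E_{H^\prime_0}(\Gamma)$, so the compressed commutators coincide and \eqref{Mourre 1} follows from \eqref{Mourre 2}. The two delicate points I expect are: (i) the regularity $H_0 \in C^\infty(A)$, because $H_0$ is \emph{not} diagonalized by $U^\prime$ --- here I would write $H_0 = H^\prime_0 + (H_0 - H^\prime_0)$ with $H_0 - H^\prime_0 = \mathcal F^*\bigl(-\kappa(p^2)\,\mathrm{diag}(1,-1)\bigr)\mathcal F$ a smooth matrix multiplier, transport it by $U^\prime$, and use the symbol calculus of Lemma \ref{PD calculus} to control its iterated commutators with $A$; and (ii) the strict positivity of $c$, which is precisely where the hypothesis $\Gamma \Subset [-3,3]\setminus\{0,\pm1,\pm3\}$ is consumed.
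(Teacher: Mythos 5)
Your main line coincides with the paper's proof: the paper also conjugates by $U^\prime$, uses the dilation-type identity to get $(U^\prime)^*[H^\prime_0,iA]U^\prime=\mathrm{diag}\bigl(|\nabla_\xi\lambda_+(D_x)|^2,|\nabla_\xi\lambda_-(D_x)|^2\bigr)$, compresses by $E_{\tilde H^\prime_0}(\Gamma)=\mathrm{diag}\bigl(\chi_{\lambda_+^{-1}(\Gamma)}(D_x),\chi_{\lambda_-^{-1}(\Gamma)}(D_x)\bigr)$ to obtain (\ref{Mourre 2}) with no compact remainder, and obtains (\ref{Mourre 1}) by observing that both sides of (\ref{Mourre 1}) equal those of (\ref{Mourre 2}) by the construction of $H^\prime_0$ (your argument via $H_0E_{H_0}(\Gamma)=H^\prime_0E_{H^\prime_0}(\Gamma)$ is exactly this). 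Your justification of $c>0$ is a correct supplement --- the paper asserts it without proof --- up to the slip that $\lambda_\pm^{-1}(\Gamma)$ is open, not compact; pass to its closure, which still avoids the critical set because $\overline{\Gamma}\subset[-3,3]\setminus\{0,\pm1,\pm3\}$.

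There is, however, a genuine gap at your point (i), and Lemma \ref{PD calculus} cannot close it. Set $R:=H_0-H^\prime_0=\mathcal F^*\bigl(-\kappa(p^2)\sigma_3\bigr)\mathcal F$, $\sigma_3:=\mathrm{diag}(1,-1)$. Since $\lambda_-=-\lambda_+$ one has $\tilde A_-=-\tilde A_+$, so on the Fourier side
\begin{align*}
\mathcal F A\mathcal F^* = T\,\Pi + B, \qquad \Pi(\xi):=U^\prime(\xi)\sigma_3U^\prime(\xi)^*=\frac{H^\prime(\xi)}{\lambda_+(\xi)},
\end{align*}
where $T=\frac12\sum_j\bigl((\partial_{\xi_j}\lambda_+)\,i\partial_{\xi_j}+i\partial_{\xi_j}\circ(\partial_{\xi_j}\lambda_+)\bigr)$ is scalar and $B$ is a bounded smooth multiplier. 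Hence $[R,iA]=iT\,[R(\xi),\Pi(\xi)]+(\text{bounded multiplier})$, and the pointwise matrix commutator $[R(\xi),\Pi(\xi)]=-\kappa(p^2)\lambda_+^{-1}[\sigma_3,H_0(\xi)]$ has off-diagonal entries proportional to $\kappa(p(\xi)^2)\,\alpha(\xi)/\lambda_+(\xi)$, which are not identically zero. Moreover, for \emph{every} admissible $\kappa$ there is an open set where both $\kappa(p^2)\neq0$ and $\nabla_\xi\lambda_+\neq0$: if $\nabla\lambda_+$ vanished on $\{\kappa(p^2)\neq0\}$, then $E\mapsto E+\kappa(E)^2$ would be constant on each component of $\{\kappa\neq0\}$, forcing $\kappa(E)=\sqrt{b-E}$ near the right endpoint $b<\frac{\delta^2}{4}$ of such a component, contradicting smoothness of $\kappa$. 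So $[R,iA]$ has a nonvanishing first-order part --- in position space, a Fourier multiplier times $x_j$ --- and is unbounded; already $H_0\in C^1(A)$ fails for this argument. Lemma \ref{PD calculus} is powerless here because it gains decay only from the differences $\tilde\partial b$ of a \emph{position-side} symbol, whereas the obstruction is the pointwise non-commutativity of two \emph{momentum-side} matrix symbols multiplying the weight $x_j$.

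Two remarks to calibrate this. First, the failure does not touch (\ref{Mourre 1})--(\ref{Mourre 2}): on $\operatorname{supp}(\kappa\circ p^2)$ the eigenvalues $\pm p(\xi)$ of $H_0(\xi)$ lie in $(-\frac{\delta}{2},\frac{\delta}{2})$, so $E_{H_0}(\Gamma)R=RE_{H_0}(\Gamma)=0$ and the compressed commutators of $H_0$ and $H^\prime_0$ coincide; your last paragraph stands. Second, your instinct that $H_0\in C^\infty(A)$ is the delicate point is sound: the paper's own one-line induction (``the commutator is again a multiplier'') is valid precisely when the matrix symbols commute, which is the case for $H^\prime_0$ (after conjugation everything is diagonal and scalar) but not for $H_0$, by the computation above. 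A genuine repair modifies the conjugate operator rather than the estimate: replace $\partial_{\xi_j}\lambda_\pm$ in (\ref{conjugate 3}) by $\theta(p(\xi)^2)\,\partial_{\xi_j}\lambda_\pm(\xi)$ with $\theta\in C^\infty$, $\theta=0$ on $[0,\frac{\delta^2}{4}]$, $\theta=1$ on $[\delta^2,\infty)$. Then the coefficients of $A$ vanish on $\operatorname{supp}R$, so all $[R,\cdot\,]$-terms disappear and the multiplier induction goes through for $H_0$ as well, while on $\lambda_\pm^{-1}(\Gamma)$ one has $p\geq\delta$, so the compressed commutator, and hence the constant $c$, is unchanged.
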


\begin{proof}
Note that the LHS (resp.\ RHS) of (\ref{Mourre 1}) equals to the LHS (resp.\ RHS) of (\ref{Mourre 2}) by the construction of $H^\prime_0$.

Since $\mathcal{F}^* H_0 \mathcal{F}$ and $\mathcal{F}^* H^\prime_0 \mathcal{F}$ are multipliers with smooth symbols on $\mathbb{T}^2$ and $\mathcal{F}^* A \mathcal{F}$ is a differential operator of degree $1$ on $\mathbb{T}^2$,
$\mathcal{F}^* [H_0,iA] \mathcal{F}$ and $\mathcal{F}^* [H^\prime_0,iA] \mathcal{F}$ are also multipliers with smooth symbols.
Inductively we see that $H_0, H^\prime_0\in C^\infty(A)$.

For the proof of (\ref{Mourre 2}), a direct calculus implies
\begin{align*}
(U^\prime)^* [H^\prime_0,iA] U &= 
[\tilde H^\prime_0, i\tilde A] \\
&=
\begin{pmatrix}
	|\nabla_\xi \lambda_+ (D_x)|^2 & 0 \\
	0 & |\nabla_\xi \lambda_- (D_x)|^2
\end{pmatrix}
 , \nonumber \\
(U^\prime)^* E_{H^\prime_0}(\Gamma) U^\prime &= E_{\tilde H^\prime_0}(\Gamma) \\
&= 
\begin{pmatrix}
	\chi_{\lambda_+^{-1}(\Gamma)} (D_x) & 0 \\
	0 & \chi_{\lambda_-^{-1}(\Gamma)} (D_x) 
\end{pmatrix}
, \nonumber
\end{align*}
where $\chi_{\lambda_\pm^{-1}(\Gamma)}$ denote the characteristic function of $\lambda_\pm^{-1}(\Gamma)$.
Therefore we obtain
\begin{align*}
&E_{H^\prime_0}(\Gamma) [H^\prime_0,iA] E_{H^\prime_0}(\Gamma) \\
=& U^\prime E_{\tilde H^\prime_0}(\Gamma) [\tilde H^\prime_0,i\tilde A] E_{\tilde H^\prime_0}(\Gamma) (U^\prime)^* \\
=&  U^\prime
\begin{pmatrix}
	|\nabla_\xi \lambda_+ (D_x)|^2 \chi_{\lambda_+^{-1}(\Gamma)} (D_x) & 0 \\
	0 & |\nabla_\xi \lambda_- (D_x)|^2 \chi_{\lambda_-^{-1}(\Gamma)} (D_x)
\end{pmatrix}
(U^\prime)^* \\
\geq &  U^\prime c E_{\tilde H^\prime_0}(\Gamma) (U^\prime)^* \\
= & c E_{H^\prime_0}(\Gamma) .
\end{align*}
\end{proof}

We consider commutators of the perturbations $V_s$, $V_\ell$ and $U^\prime V_\ell \left(U^\prime\right)^*$, and the conjugate operator $A$.
The next lemma claims that the commutators are small in the sense of the Mourre theory, i.e., compact.

\begin{lem}\label{comm}
Let $V_s$ and $V_\ell$ satisfy the condition in Assumption \ref{ass} with $\rho >0$.
Then, for $W=V_s$, $V_\ell$ and $U^\prime V_\ell \left(U^\prime\right)^*$,
\begin{align}\label{C estimate}
\la x \ra^{\rho} [W,iA] \in \mathcal{B}(\mathcal{H}).
\end{align}

\end{lem}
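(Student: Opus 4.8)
The plan is to exploit two structural features of the operators at hand: the conjugating unitary $U^\prime$ is a matrix-valued Fourier multiplier whose symbol $U^\prime(\x)$ is smooth on \emph{all} of $\T^2$ (this is precisely the reason for passing to the modified operator), while each $\tilde A_\pm$ in (\ref{conjugate 3}) is a first-order pseudodifference operator. Writing $\tilde A_\pm=\mathrm{Op}(a_\pm)$ in the amplitude form of Lemma \ref{PD calculus}, a short computation (moving the factors $x_j$ to the input variable, integrating by parts in $\x$) gives $a_\pm(\x,y)=\sum_j y_j\,\partial_{\x_j}\l_\pm(\x)+\tfrac{i}{2}\sum_j\partial_{\x_j}^2\l_\pm(\x)$, so that $|\partial_\x^\alpha a_\pm(\x,y)|\le C_\alpha\la y\ra$; that is, $\tilde A_\pm$ has order $m_1=-1$ in the sense of Lemma \ref{PD calculus}. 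The whole proof then reduces to weighted commutator bookkeeping supplied by that lemma, applied entrywise to the matrix operators.

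First I would record two auxiliary facts, both instances of Lemma \ref{PD calculus} with the smooth symbol $U^\prime(\x)$ (so $m_1=0$): for every $s\in\R$ the conjugations $\la x\ra^s U^\prime\la x\ra^{-s}$ and $\la x\ra^s (U^\prime)^*\la x\ra^{-s}$ are bounded (use $p=0$, $q=1-s$ together with a spare $\la x\ra^{-1}$); and, since $|\tilde\partial_{x_j}V_\ell[x]|\le C\la x\ra^{-1-\rho}$ and, crudely, $|\tilde\partial_{x_j}V_{s,k}[x]|\le C\la x\ra^{-1-\rho}$ by Assumption \ref{ass}, the commutators $[V_\ell,U^\prime]$, $[V_{s,k},U^\prime]$ and their adjoints gain the weight $\la x\ra^{-1-\rho}$ (here $m_2=1+\rho$), so that $\la x\ra^{1+\rho}[V_\ell,U^\prime]$ and $\la x\ra^{\rho}[V_\ell,U^\prime]\la x\ra$ are bounded.

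Next I treat the conjugated potential $W=U^\prime V_\ell (U^\prime)^*$, which is easiest because the unitaries cancel: $[W,iA]=U^\prime[V_\ell,i\tilde A](U^\prime)^*$. Since $V_\ell$ is a scalar multiple of the identity in the fiber and $\tilde A$ is diagonal, $[V_\ell,i\tilde A]=\operatorname{diag}([V_\ell,i\tilde A_+],[V_\ell,i\tilde A_-])$, and each scalar commutator is handled by Lemma \ref{PD calculus} with $m_1=-1$ (from $\tilde A_\pm$) and $m_2=1+\rho$ (from $V_\ell$): taking $p=\rho$, $q=0$ with $p+q=m_1+m_2=\rho$ shows $\la x\ra^{\rho}[V_\ell,i\tilde A_\pm]$ is bounded. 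Writing $\la x\ra^{\rho}U^\prime=(\la x\ra^{\rho}U^\prime\la x\ra^{-\rho})\la x\ra^{\rho}$ and invoking the first auxiliary fact then gives (\ref{C estimate}) for this $W$.

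Finally, for the unconjugated $W=V_\ell$ and $W=V_s$ the unitaries no longer cancel, and I would expand
\begin{align*}
[W,iA]=[W,U^\prime]\,i\tilde A(U^\prime)^*+U^\prime[W,i\tilde A](U^\prime)^*+U^\prime i\tilde A\,[W,(U^\prime)^*].
\end{align*}
The middle term is exactly of the type just handled (for $W=V_s$ the commutator stays diagonal because both $V_s$ and $\tilde A$ are diagonal, and one uses the crude difference bound on $V_{s,k}$). The outer two terms carry the one genuine difficulty: $\tilde A$ is unbounded, of order $+1$ in $x$, so I must verify that the $\la x\ra^{-1-\rho}$ decay produced by $[W,U^\prime]$ (and $[W,(U^\prime)^*]$) beats this growth after multiplication by the prefactor $\la x\ra^{\rho}$. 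For the first term I would write
\begin{align*}
\la x\ra^{\rho}[W,U^\prime]\,\tilde A(U^\prime)^*=\bigl(\la x\ra^{\rho}[W,U^\prime]\la x\ra\bigr)\bigl(\la x\ra^{-1}\tilde A\bigr)(U^\prime)^*,
\end{align*}
where the first factor is bounded by the second auxiliary fact and $\la x\ra^{-1}\tilde A$ is bounded because, in the explicit form (\ref{conjugate 3}), $F^*(\partial_{\x_j}\l_\pm)F$ is a bounded Fourier multiplier, $x_j\la x\ra^{-1}$ is bounded, and the single leftover commutator $[\la x\ra^{-1},F^*(\partial_{\x_j}\l_\pm)F]$ is again bounded by Lemma \ref{PD calculus}; the third term is symmetric, using $\tilde A\la x\ra^{-1}$ bounded and $\la x\ra^{1+\rho}[W,(U^\prime)^*]$ bounded. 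The hard part will be precisely this growth-versus-decay balancing, together with the care needed to justify the commutator manipulations on the core $\mathcal S(\Z^2)$ on which $A$ is essentially selfadjoint; everything else is a mechanical reduction to the scalar calculus of Lemma \ref{PD calculus}.
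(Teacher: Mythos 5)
Your proof is correct, and its core is the same as the paper's: diagonalize $A=U^\prime\tilde A\left(U^\prime\right)^*$, reduce everything to the scalar calculus of Lemma \ref{PD calculus}, and commute polynomial weights past $U^\prime$ using the smoothness of $U^\prime(\x)$ on all of $\T^2$. For $W=U^\prime V_\ell\left(U^\prime\right)^*$ your argument coincides with the paper's verbatim. You diverge only in the unconjugated cases: the paper handles $W=V_\ell$ by writing $A=\bigl(\mathrm{Op}(a_{ij})\bigr)_{i,j}$ as a $2\times2$ matrix of scalar first-order pseudodifference operators and applying Lemma \ref{PD calculus} entrywise (legitimate since $V_\ell$ is scalar in the fiber, so $[V_\ell,A]_{ij}=[V_\ell,\mathrm{Op}(a_{ij})]$), and it handles $W=V_s$ with no commutator cancellation at all, simply checking that $\la x\ra^{\rho}V_sA$ and $\la x\ra^{\rho}AV_s$ are separately bounded from $|V_s[x]|\leq C\la x\ra^{-1-\rho}$ --- which is a bit shorter than your triple expansion $[W,iA]=[W,U^\prime]\,i\tilde A\left(U^\prime\right)^*+U^\prime[W,i\tilde A]\left(U^\prime\right)^*+U^\prime i\tilde A\,[W,\left(U^\prime\right)^*]$, though your version is equally valid and your growth-versus-decay balance for the outer terms ($\rho$ from the prefactor plus $1$ from $\tilde A$ against the gain $1+\rho$ from $[W,U^\prime]$, exactly tight) is carried out correctly. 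One point where your bookkeeping is actually more careful than the paper's text: your computation of the amplitude $a_\pm(\x,y)=\sum_j y_j\,\partial_{\x_j}\l_\pm(\x)+\tfrac{i}{2}\sum_j\partial_{\x_j}^2\l_\pm(\x)$ gives $m_1=-1$ in the convention $|\partial_\x^\alpha a|\leq C_\alpha\la y\ra^{-m_1}$ of Lemma \ref{PD calculus}, hence total weight gain $m_1+m_2=\rho$; the paper's proof states ``$m_1=1$'' and claims boundedness of $\la x\ra^{1+\rho}[V_\ell,i\tilde A_\pm]$, which is more than Lemma \ref{PD calculus} yields and more than is true in general (the commutator picks up a factor of order $x$ from $\tilde A_\pm$, so decay $\la x\ra^{-\rho}$ is optimal), but this is a harmless slip there since only the weight $\la x\ra^{\rho}$ is ever used --- exactly the weight you derive.
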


\begin{proof}
First note that
\begin{align} \label{commutator}
\left(U^\prime\right)^*[U^\prime V_\ell \left(U^\prime\right)^* , iA ] U^\prime
= [V_\ell , i \tilde A ] 
= 
\begin{pmatrix}
	[V_\ell , i \tilde A_+ ] & 0 \\ 
	0 & [V_\ell , i \tilde A_- ]
\end{pmatrix}
. 
\end{align}
Since $\tilde A_\pm = \mathrm{Op}(\tilde a_\pm)$ with some functions $\tilde a_\pm$ on $\T^2\times\Z^2$ satisfying the condition of Lemma \ref{PD calculus} with $m_1=1$,
it follows that $\la x \ra^{1+\rho} [V_\ell , i \tilde A_\pm ]$ are bounded on $\ell^2(\mathbb{Z}^2)$. 
Since
\begin{align*}
[U^\prime V_\ell \left(U^\prime\right)^* , iA ]
= U^\prime [V_\ell , i\tilde A] \left( U^\prime \right)^*,
\end{align*}
using Lemma \ref{PD calculus} again shows (\ref{C estimate}) for $W=U^\prime V_\ell \left(U^\prime\right)^*$.
In order to show (\ref{C estimate}) for $W=V_\ell$ or $V_s$, note that $A$ has the representation
\begin{align*}
A = 
\begin{pmatrix}
	\mathrm{Op}(a_{11}) & \mathrm{Op}(a_{12}) \\
	\mathrm{Op}(a_{21}) & \mathrm{Op}(a_{22}) 
\end{pmatrix}
, 
\end{align*}
where each $a_{ij}$ satisfies the condition of Lemma \ref{PD calculus} with $m_1=1$.
Then (\ref{C estimate}) for $W=V_\ell$ is a direct result of Lemma \ref{PD calculus}. 
The last case is also proved since $\la x \ra^\rho V_s A$ and $\la x \ra^\rho A V_s$ are bounded operators by Lemma \ref{PD calculus}.

\end{proof}

Using the above lemma, we see that the perturbations are of the class of $\mathcal{C}^{1,1}(A)$.

\begin{lem}
Let $V_s$ and $V_\ell$ satisfy the condition in Assumption \ref{ass} with $\rho >0$. Then $V_s$, $V_\ell$ and $U^\prime V_\ell \left(U^\prime\right)^*$ 
belong to $\mathcal{C}^{1,1}(A)$.
\end{lem}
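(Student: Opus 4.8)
The plan is to check the defining integral of $\mathcal{C}^{1,1}(A)$, using the inclusion $C^2(A)\subset\mathcal{C}^{1,1}(A)$ already recorded above together with a real-interpolation argument. The three operators split along a dichotomy visible in Assumption \ref{ass}: the long-range pieces $V_\ell$ and $U^\prime V_\ell (U^\prime)^*$ are smooth, in the sense that all difference-derivatives decay, but only long-range; while $V_s$ is genuinely rough (no bound on $\tilde\partial V_s$ is assumed) yet short-range. I would exploit smoothness for the former and decay for the latter.

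For $V_\ell$ and $U^\prime V_\ell(U^\prime)^*$ I would prove the stronger statement that they lie in $C^2(A)$, indeed in $C^\infty(A)$, which suffices. It is enough to check that all iterated commutators $[V_\ell,iA]$, $[[V_\ell,iA],iA]$, $\dots$ extend to bounded operators. This is the computation of Lemma \ref{comm} iterated: each commutator with $A$ applied to $V_\ell$ produces, through Lemma \ref{PD calculus}, one additional difference-derivative of $V_\ell$ (gaining a factor $\la x\ra^{-1}$ by Assumption \ref{ass}) at the cost of one factor of $x$ from $A$ (losing $\la x\ra$), so that the $k$-fold commutator is a pseudo-difference operator bounded by $\la x\ra^{-\rho}$, hence bounded. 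For $U^\prime V_\ell(U^\prime)^*$ I would use the identity (\ref{commutator}) to reduce each commutator to the diagonal operator built from $[V_\ell,i\tilde A_\pm]$ and iterate the same estimate through the smooth Fourier multiplier $U^\prime$.

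The short-range operator $V_s$ is the genuine $\mathcal{C}^{1,1}$ case and the main obstacle, because here $C^2(A)$ genuinely fails: with only $|V_s|\le C\la x\ra^{-1-\rho}$ available, the formal second commutator $[[V_s,iA],iA]$ grows like $\la x\ra^{1-\rho}$ and is unbounded for $\rho<1$. I would instead realize $\mathcal{C}^{1,1}(A)$ as a real-interpolation space between $C^2(A)$ and $\mathcal{B}(\mathcal{H})$ (see \cite{A-BdM-G}) and estimate the corresponding $K$-functional. Decomposing $V_s=\chi(x/R)V_s+(1-\chi(x/R))V_s$ with a smooth cutoff $\chi$, the localized term lies in $C^2(A)$ with norm $O(R^{1-\rho})$, while the tail is bounded in $\mathcal{B}(\mathcal{H})$ by $O(R^{-1-\rho})$. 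Optimizing at $R\sim\tau^{1/2}$ gives $K(\tau,V_s)\lesssim\tau^{(1-\rho)/2}$, and the interpolation integral $\int_0^\infty\tau^{-3/2}K(\tau,V_s)\,d\tau$ converges precisely because $\rho>0$. The essential point is that the $C^2$-bound $O(R^{1-\rho})$ for the localized piece uses no regularity of $V_s$ at all: it rests only on the smoothness of the symbol of $A$—that is, on $\lambda_\pm\in C^\infty(\T^2)$, which is exactly what the passage to the modified operator $H^\prime_0$ secured—through the boundedness of weighted products such as $\la x\ra^{-1}A$ and $A\la x\ra^{-1}$, combined with $\sup_{x}\la x\ra^2|\chi(x/R)V_s[x]|\lesssim R^{1-\rho}$.

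I expect the only delicate points to be, first, confirming that the formal $\la x\ra^{1-\rho}$ growth of $[[V_s,iA],iA]$ on the support of $\chi(x/R)$ indeed yields a bounded operator of norm $O(R^{1-\rho})$—this is where the smooth-symbol structure of $A$ enters and where Lemma \ref{PD calculus} (or direct weight insertion) does the work—and, second, matching the two rates so that the positive decay exponent $\rho$ is transferred intact into convergence of the $K$-functional integral. The long-range case, by contrast, I expect to be routine iteration of Lemma \ref{comm}.
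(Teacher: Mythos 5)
Your argument is correct in substance, but it follows a genuinely different route from the paper. The paper treats all three potentials uniformly and never goes beyond a \emph{single} commutator: following \cite[Lemma 6.2]{P-R}, it verifies that $\Lambda u[x]=\la x\ra u[x]$ satisfies the hypotheses of \cite[Theorem 6.1]{BdM-S} (unitarity of $e^{-it\Lambda}$, and boundedness of $A^N\Lambda^{-N}$ via Lemma \ref{PD calculus}), which reduces membership in $\mathcal{C}^{1,1}(A)$ to the dyadic condition $\int_1^\infty \frac{d\l}{\l}\,\| \th(\Lambda/\l)[W,iA]\|<\infty$; this follows at once from the weighted bound $\la x\ra^{\rho}[W,iA]\in\mathcal{B}(\mathcal{H})$ of Lemma \ref{comm}, which holds for $V_s$, $V_\ell$ and $U^\prime V_\ell(U^\prime)^*$ alike --- in particular $V_s$ requires no regularity there either, since $\la x\ra^\rho V_sA$ and $\la x\ra^\rho AV_s$ are bounded because $A$ costs only one power of $x$. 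Your decomposition $V_s=\chi(x/R)V_s+(1-\chi(x/R))V_s$ with the trade-off $R^{1-\rho}$ (second commutator of the localized piece, via weight insertion such as $A\chi V_sA=(A\la x\ra^{-1})(\la x\ra^2\chi V_s)(\la x\ra^{-1}A)$ up to bounded corrections) against $R^{-1-\rho}$ (the tail) is essentially the concrete mechanism hidden inside the abstract theorem, and your exponent bookkeeping closes the interpolation integral exactly when $\rho>0$, matching the paper's $\int_1^\infty\l^{-1-\rho}d\l<\infty$. What your route buys is self-containedness (no appeal to the \cite{BdM-S} criterion) plus the stronger conclusion $V_\ell, U^\prime V_\ell(U^\prime)^*\in C^2(A)$; what it costs is that Lemma \ref{PD calculus}, as stated, controls only single weighted commutators of a multiplication operator with a pseudo-difference operator, so your ``routine iteration'' for the second (let alone all higher) commutators of the long-range pieces, and for the localized $V_s$ piece, actually requires an extension of the kernel estimate of Appendix \ref{pfPD} to iterated commutators or amplitudes --- doable by the same Schur--Young technique, but it is precisely the extra work the paper's single-commutator criterion is designed to avoid. (Also note only $C^2(A)$, not $C^\infty(A)$, is needed for the long-range parts, via the inclusion $C^2(A)\subset\mathcal{C}^{1,1}(A)$ recorded in Appendix \ref{Mourre section}.)
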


\begin{proof}

The proof is motivated by \cite[Lemma 6.2]{P-R}.
First we remark that the operator $\Lambda u[x] := \la x \ra u[x]$ satisfies the condition of \cite[Theorem 6.1]{BdM-S}; 
the first condition is attained by the unitarity of $e^{-it\Lambda}$, $t\in\R$, on $\mathcal{H}$, and the second one that $A^N \Lambda^{-N}$ is a bounded operator on $\mathcal{H}$ for some integer $N \geq 1$, follows from Lemma \ref{PD calculus}.
Thus it suffices to show
\begin{align} \label{C 1,1}
\int_1^\infty \frac{d\l}{\l} \left\| \th\left(\frac{\Lambda}{\l}\right) [W,iA]\right\|_{\mathcal{B}(\mathcal{H})} < \infty
\end{align}
for $W=V_s$, $V_\ell$ and $U^\prime V_\ell \left(U^\prime\right)^*$ and some $\th \in C_c^\infty\left( (0,\infty) \right)$ not identically zero. 
However this follows from Lemma \ref{comm} and
\begin{align*}
\left\| \th\left(\frac{\Lambda}{\l}\right) [W,iA]\right\|_{\mathcal{B}(\mathcal{H})}
&\leq \left\| \th\left(\frac{\Lambda}{\l}\right) \Lambda^{-\rho} \right\|_{\mathcal{B}(\mathcal{H})}
\left\| \Lambda^\rho [W,iA]\right\|_{\mathcal{B}(\mathcal{H})} \\
& \leq C \la \l \ra^{-\rho} \left\| \Lambda^\rho [W,iA]\right\|_{\mathcal{B}(\mathcal{H})}.
\end{align*}

\end{proof}

We have confirmed that Theorem \ref{Mourre thm} is applicable to $S=H$ or $H^\prime_\ell$ and $A$ defined by (\ref{conjugate}), (\ref{conjugate 2}) and (\ref{conjugate 3}).
%
%
Therefore we obtain the limiting absorption principle for $H^\prime_\ell$ and $H$.

\begin{thm} \label{LAP}
i) $H$ and $H^\prime_\ell$ have finitely many eigenvalues with counting multiplicity in $\Gamma$.
\\
ii) For any $I \Subset \Gamma \backslash \sigma_{\text{pp}}(H)$, $I^\prime \Subset \Gamma \backslash \sigma_{\text{pp}}(H^\prime_\ell)$ and $s>\frac{1}{2}$,
\begin{align*}
&\sup_{\lambda\in I,\ \varepsilon>0} \left\| \langle x \rangle^{-s} (H-\lambda\mp i\varepsilon)^{-1} \langle x \rangle^{-s} \right\|_{\mathcal{B}(\mathcal{H})} < \infty, \\
&\sup_{\lambda^\prime\in I^\prime,\ \varepsilon>0} \left\| \langle x \rangle^{-s} (H^\prime_\ell-\lambda^\prime \mp i\varepsilon)^{-1} \langle x \rangle^{-s} \right\|_{\mathcal{B}(\mathcal{H})} < \infty.
\end{align*}
\end{thm}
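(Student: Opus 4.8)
The plan is to obtain Theorem \ref{LAP} by applying the abstract Mourre result, Theorem \ref{Mourre thm}, to $S=H$ and to $S=H^\prime_\ell$ with the conjugate operator $A$ of \eqref{conjugate}--\eqref{conjugate 3}, taking $I=\Gamma$, and then by translating the Besov-space resolvent bound that it produces into the weighted $L^2$ estimate of part ii).

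First I would check that the hypotheses of Theorem \ref{Mourre thm} hold for both operators. The membership $H,H^\prime_\ell\in\mathcal{C}^{1,1}(A)$ is immediate, since $\mathcal{C}^{1,1}(A)$ is a vector space that contains $H_0,H^\prime_0$ (as $C^\infty(A)\subset\mathcal{C}^{1,1}(A)$) and the perturbations $V_s$, $V_\ell$, $U^\prime V_\ell (U^\prime)^*$ by the preceding lemma on $\mathcal{C}^{1,1}(A)$-membership. For the Mourre inequality I would pass from \eqref{Mourre 1} and \eqref{Mourre 2}, which hold for $H_0$ and $H^\prime_0$ with strict constant $c>0$ and no remainder, to $H$ and $H^\prime_\ell$ by the standard stability of the Mourre estimate under compact perturbations. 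The two required ingredients are in hand: the perturbations $V$ and $U^\prime V_\ell (U^\prime)^*$ are compact, being multiplications by sequences tending to $0$; and their commutators with $A$ are compact as well, since $\la x\ra^{\rho}[W,iA]$ is bounded for $W=V_s,V_\ell,U^\prime V_\ell (U^\prime)^*$ by Lemma \ref{comm} while $\la x\ra^{-\rho}$ is compact on $\ell^2(\Z^2)$. Combined with the boundedness of $[H_0,iA]$ and $[H^\prime_0,iA]$, which are multipliers with smooth symbols as established in showing $H_0,H^\prime_0\in C^\infty(A)$, this yields the inequality \eqref{Mourre ineq} on $I=\Gamma$ for $S=H$ and $S=H^\prime_\ell$ with the same constant $c$ and some compact remainder.

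With the hypotheses verified, Theorem \ref{Mourre thm} applied to $S=H$ and $I=\Gamma$ gives at once part i) for $H$ and, for every $I\Subset\Gamma\setminus\sigma_{pp}(H)$, the uniform bound
\begin{align*}
\sup_{\lambda\in I,\ \varepsilon>0}\|(H-\lambda\mp i\varepsilon)^{-1}\|_{\mathcal{B}(B,B^*)}<\infty ;
\end{align*}
the identical application to $S=H^\prime_\ell$ yields the corresponding statements for $H^\prime_\ell$.

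It remains to upgrade these Besov-space bounds to the weighted $L^2$ bounds of part ii), and this is the only genuinely model-dependent step. The mechanism is to show that $\la x\ra^{-s}$ maps $\mathcal{H}$ continuously into $B=(\mathcal{D}(A),\mathcal{H})_{\frac12,1}$ for $s>\tfrac12$. The point is that $A$ is comparable to $\la x\ra$: as noted in the verification that the perturbations lie in $\mathcal{C}^{1,1}(A)$, $A^{N}\la x\ra^{-N}$ is bounded by Lemma \ref{PD calculus}, and interpolation then yields boundedness of $\la A\ra^{s}\la x\ra^{-s}$. Hence $\la x\ra^{-s}$ sends $\mathcal{H}$ into $\mathcal{D}(\la A\ra^{s})$, and the standard real-interpolation embedding $\mathcal{D}(\la A\ra^{s})\subset B$ for $s>\tfrac12$ completes the claim; taking adjoints, $\la x\ra^{-s}$ also maps $B^*$ continuously into $\mathcal{H}$. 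Writing
\begin{align*}
\la x\ra^{-s}(H-\lambda\mp i\varepsilon)^{-1}\la x\ra^{-s}
=\bigl(\la x\ra^{-s}\!:B^*\to\mathcal{H}\bigr)(H-\lambda\mp i\varepsilon)^{-1}\bigl(\la x\ra^{-s}\!:\mathcal{H}\to B\bigr)
\end{align*}
and using the $\mathcal{B}(B,B^*)$ bound for the middle factor gives the uniform $\mathcal{B}(\mathcal{H})$ estimate of part ii), and likewise for $H^\prime_\ell$. The main obstacle is precisely this conversion: the transfer of the Mourre estimate is routine soft analysis, whereas the passage from the abstract space $B$ to the concrete weight $\la x\ra^{-s}$ rests on the comparability of the pseudodifferential conjugate operator $A$ with the position weight $\la x\ra$, which must be read off from the calculus of Lemma \ref{PD calculus}.
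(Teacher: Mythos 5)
Your proposal is correct and takes essentially the same route as the paper: verify that $H$ and $H^\prime_\ell$ lie in $\mathcal{C}^{1,1}(A)$ and satisfy a Mourre estimate with compact remainder on $\Gamma$, apply Theorem \ref{Mourre thm}, and convert the resulting $\mathcal{B}(B,B^*)$ resolvent bound into the weighted estimates via the embedding $\la x \ra^{-s}\mathcal{H} \subset B$ for $s>\frac{1}{2}$. The only (harmless) deviations are that you spell out the compact-perturbation transfer of the Mourre estimate, which the paper leaves implicit, and that you obtain the embedding by operator interpolation (boundedness of $\la A \ra^{s}\la x \ra^{-s}$ together with $\mathcal{D}(\la A \ra^{s}) \subset B$), whereas the paper interpolates the spaces directly, $\mathcal{H}_s \subset (\mathcal{H}_1,\mathcal{H})_{\frac{1}{2},1} \subset (\mathcal{D}(A),\mathcal{H})_{\frac{1}{2},1} = B$, using only $\mathcal{H}_1 \subset \mathcal{D}(A)$.
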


\begin{proof}

It remains to prove that $\mathcal{H}_s:= \la x \ra^{-s}\mathcal{H} \subset B$ if $s >\frac{1}{2}$.
However it is shown if we remark that $\mathcal{H}_1 \subset \mathcal{D}(A)$ and hence 
$\mathcal{H}_s \subset (\mathcal{H}_1, \mathcal{H})_{\frac{1}{2},1} \subset (\mathcal{D}(A), \mathcal{H})_{\frac{1}{2},1} = B$.
\end{proof}

\begin{proof}[Proof of Proposition \ref{Kato smooth}]
It suffices to show (\ref{K smooth 2}) for $G=\langle x \rangle^{-s} \chi(H)$ and $\langle x \rangle^{-s} \chi(H^\prime_\ell)$.
However this is proved by Theorem \ref{LAP} and the condition of $\operatorname{supp}\chi$.
\end{proof}

\begin{rem}
Theorem \ref{LAP}  may look like a direct consequence of \cite{P-R}.
However the above assertion is more concrete in that the set $\mathcal{T}=\{0,\pm1,\pm3\}$ of threshold energies is explicitly determined.
\end{rem}

\begin{rem}
For any $\lambda \in \Gamma \backslash \sigma_{\text{pp}}(H)$, $\lambda^\prime \in \Gamma \backslash \sigma_{\text{pp}}(\hat{H}_\ell)$ and $s>\frac{1}{2}$, there exist the norm limits
\begin{align*}
&\lim_{\varepsilon\downarrow0} \langle x \rangle^{-s} (H-\lambda\mp i\varepsilon)^{-1} \langle x \rangle^{-s}, \\
&\lim_{\varepsilon\downarrow0} \langle x \rangle^{-s} (\hat{H}_\ell-\lambda^\prime \mp i\varepsilon)^{-1} \langle x \rangle^{-s}.
\end{align*}
For the proof, see e.g.\ \cite{An-I-M} and \cite{N2}.
\end{rem}

\section{Proof of Lemma \ref{PD calculus}}\label{pfPD}
First we observe that
\begin{align*}
\mathrm{Op}(a)u[x] = (2\pi)^{-d} \sum_{y\in\Z^d} A[x,y] u[y], \quad u \in \mathcal{S}(\Z^d) ,
\end{align*} 
where
\begin{align*}
A[x,y] := \int_{\T^d} e^{i(x-y)\cdot\x} a(\x,y) d\x .
\end{align*}
A direct calculus implies
\begin{align*}
(\la \cdot \ra^{p} [\mathrm{Op}(b),\mathrm{Op}(a)] \la \cdot \ra^{q}) u[x] = (2\pi)^{-d} \sum_{y\in\Z^d} K[x,y] u[y],
\end{align*}
where
\begin{align*}
K[x,y] := \la x \ra^{p} \la y \ra^{q} (W[x] - W[y]) A[x,y].
\end{align*}
According to Young's inequality, the boundedness of the operator follows from the inequalities 
\begin{align}
\sup_{y\in\Z^d}\sum_{x\in\Z^d} K[x,y] < \infty ,  \label{Young 1}\\
\sup_{x\in\Z^d}\sum_{y\in\Z^d} K[x,y] < \infty. \label{Young 2}
\end{align}
Let $z:=x-y$, $k:=\sum_{j=1}^d |z_j|$, and $z^0=0,z^1,\dots,z^{k}=z$ be a path from $0$ to $z$ in $\mathbb{Z}^d$ such that $|z^i - z^{i-1}|=1$ for $i=1,2,\dots,k$.
 Then we learn
\begin{align*}
|W[x]-W[y]| &\leq \sum_{i=1}^{k} |W[z^i+y] - W[z^{i-1}+y]| \\
&\leq C \sum_{i=1}^{k} \la z^{i}+y \ra^{-m_2} \\
&= C \sum_{i=1}^{k} \la z^{i}+y \ra^{-p} \la z^{i}+y \ra^{-q+m_1} \\
&\leq C^\prime \sum_{i=1}^{k} \la x-y-z^i \ra^{|p|} \la x \ra^{-p} 
\la z^i \ra^{|q-m_1|} \la y \ra^{-q+m_1} \\
&\leq C^{\prime\prime} \la x-y \ra^{M} \la x \ra^{-p} \la y \ra^{-q+m_1} ,
\end{align*}
where $M:=|p|+|q-m_2|$.
Note that the second last inequality follows from
\begin{align*}
\la x + y \ra &\leq C_d \la x \ra \la y \ra , \\
\la x + y \ra^{-1} &\leq C_d \la x \ra \la y \ra^{-1}
\end{align*}
for $x,y\in\R^d$.
In order to estimate $A[x,y]$, we observe for $\a\in\Z_+^d$
\begin{align*}
|(x-y)^\alpha A[x,y]| = \left| i^{|\alpha|} \int_{\T^d} e^{i(x-y)\cdot\x} \partial_\xi^\alpha a(\x,y) d\x \right| \leq C_{\alpha} \la y \ra^{-m_1} .
\end{align*}
Thus we have
\begin{align*}
|K[x,y]| \leq C^{\prime\prime} \la x-y \ra^{M} \la y\ra^{m_1} |A[x,y]| \leq C_{M,d} \la x-y \ra^{-d-1}.
\end{align*}
Hence we obtain (\ref{Young 1}) and (\ref{Young 2}).
\qed

\end{document}